\title{Tensor-Train AR}
\let\counterwithin\relax
\newtheorem{assumption}{Assumption}
\newtheorem{remark}{Remark}
\newtheorem{proposition}{Proposition}
\newtheorem{theorem}{Theorem}
\newcolumntype{P}[1]{>{\centering\arraybackslash}p{#1}} %
\algnewcommand{\Initialize}[1]{%
	\State \textbf{Initialize:}
	\Statex \hspace*{\algorithmicindent}\parbox[t]{.8\linewidth}{\raggedright #1}
}
\newcommand{\norm}[1]{\left\lVert#1\right\rVert}
\newcommand{\bm}{\boldsymbol}
\newcommand{\cm}[1]{\mbox{\boldmath$\mathscr{#1}$}}
\newcommand{\Lrateten}{\raisebox{0pt}{\tikz{\draw[blue,solid,line width = 1.0pt,fill=blue](2.75mm,0.75mm) circle (0.7mm);\draw[-,blue,solid,line width = 1.5pt](0.,0.8mm) -- (5.5mm,0.8mm)}}}
\newcommand{\Lratetwenty}{\raisebox{0pt}{\tikz{\node[draw=green,scale=0.1,minimum size=5mm,rotate=0,line width=1.mm] at (2.75mm,0.65mm) {};\draw[-,green,dashed,line width = 1.5pt](0.,0.65mm) -- (5.5mm,0.65mm)}}}
\newcommand{\Lratefive}{\raisebox{-2pt}{\tikz{\node[draw=red,scale=0.35,cross out,minimum size=5mm,rotate=0,line width=1.mm] at (2.75mm,0.65mm) {};\draw[-,red,densely dotted,line width = 1.5pt](0.,0.65mm) -- (5.5mm,0.65mm)}}}
\DeclareMathOperator*{\vectorize}{vec}
\DeclareMathOperator*{\rank}{rank}
\DeclareMathOperator*{\argmin}{arg\,min}
\definecolor{blue}{rgb}{0.0,0.0,1.0}
\definecolor{green}{rgb}{0.0,0.5,0.0}
\definecolor{red}{rgb}{1.0,0.0,0.0}
\definecolor{cyan}{rgb}{0.0,1.0,1.0}
\title{An efficient tensor regression for high-dimensional data}
\author{Yuefeng Si, Yingying Zhang, Yuxi Cai, Chunling Liu and Guodong Li
	\\ \textit{University of Hong Kong, East China Normal University} \\ \textit{and Hong Kong Polytechnic University} }
\begin{document}
	\maketitle

\begin{abstract}
Many existing tensor regression models for high-dimensional data are based on Tucker decomposition, which has good properties but loses its efficiency in compressing tensors very quickly as the order of tensors increases, say greater than four or five.
However, for the simplest tensor autoregression in handling time series data, its coefficient tensor already has the order of six.
This paper revises a newly proposed tensor train (TT) decomposition and then applies it to tensor regression such that a nice statistical interpretation can be obtained.
The new tensor regression can well match the data with hierarchical structures, and it even can lead to a better interpretation for the data with factorial structures, which are supposed to be better fitted by models with Tucker decomposition.
More importantly, the new tensor regression can be easily applied to the case with higher order tensors since TT decomposition can compress the coefficient tensors much more efficiently.
The methodology is also extended to tensor autoregression for time series data, and nonasymptotic properties are derived for the ordinary least squares estimations of both tensor regression and autoregression.
A new algorithm is introduced to search for estimators, and its theoretical justification is also discussed. 
Theoretical and computational properties of the proposed methodology are verified by simulation studies, and the advantages over existing methods are illustrated by a real example.

\end{abstract}
	
\textit{Keywords}: high-dimensional time series, nonasymptotic properties, Riemannian gradient descent, tensor decomposition, tensor regression 

\newpage
\section{Introduction}
Modern society has witnessed an enormous progress in collecting all kinds of data with complex structures, and datasets in the form of tensors have been increasingly encountered from many fields, such as signal processing \citep{zhao2012higher,Shimoda2012}, medical imaging analysis \citep{Zhou2013,li2018tucker}, economics and finance \citep{Chen2021,Wang2021}, digital marketing \citep{hao2021sparse,bi2018multilayer} and many others. 
Many unsupervised learning methods have been considered for these datasets, and they include the principal component analysis \citep{Zhang2022}, clustering \citep{Sun2019,Luo2022}, and factor modeling \citep{bi2018multilayer,Chen2021}.
On the other hand, there is a bigger literature for analyzing tensor-valued observations with supervised learning methods, and most of them come from the area of machine learning by using neural networks \citep{Novikov2015, Kossaifi2020}. 
As the most important supervised learning method, regression plays an important role in modeling statistical association between responses and predictors and then making forecasts for responses, and its application on tensor-valued observations, called { regression, has recently attracted more and more attention in the literature \citep{Raskutti2019,Han2020,Wang2021a}.

Tensor regression usually involves a huge number of parameters, which actually grows exponentially with the order of tensors, and the situation is much more serious for tensor autoregression in handling time series data since both responses and predictors are high order tensors with the same sizes \citep{Wang2021a,Chen2021a}.
To address this challenge, some dimension reduction tools will be needed for the coefficients of a tensor regression, while canonical polyadic (CP) and Tucker decomposition are two most commonly used tools for compressing tensors in the literature \citep{Kolda2009}.
CP decomposition \citep{harshman1970foundations} can compress a tensor dramatically by representing them into a linear combination of basis tensors with rank one, and this makes it very useful in handling higher order tensors.
{Accordingly, many researches in statistics employ CP decomposition to restrict the parameter space of tensor regression \citep{Zhou2013, rajarshi2017bayesian, lock2018tensor} or to factorize tensor-valued data directly \citep{dunson2009nonparam, zhou2015bayesian, lock2018supervised, bi2018multilayer}.
However, it is generally NP-hard in computation to determine the CP-rank of a tensor \citep{Hillar2013}, and the corresponding approximation with a fixed rank in the Frobenius norm can be ill-posed \citep{Kolda2009}.} As a result, its calculation is well known to be computationally unstable \citep{Kolda2009, Oseledets2010,Hillar2013}. 

In the meanwhile, {Tucker ranks and decomposition \citep{Tucker1966} can be always and easily determined, say by high order singular value decomposition, and hence its numerical performance is the most stable among almost all tensor decomposition tools \citep{Kolda2009}.}
More importantly, tensor regression with Tucker decomposition can be well interpreted from statistical perspectives; see Section \ref{adv:TT} for details.
As a result, Tucker decomposition has dominated the applications of tensor regression \citep{zhao2012higher,li2018tucker, Raskutti2019,hao2021sparse,Han2020,gahrooei2021multiple} and autoregression \citep{Wang2021a, Wang2021, li2021multilinear}.
However, Tucker decomposition is well known not be able to effectively compress higher order tensors, and {it loses the efficiency very quickly as the order of a tensor increases while the sample size is maintained at the same level} \citep{oseledetsTR}.
This hinders its further applications on tensor regression.
As an illustration, consider a time series with observed values at each time point being a third order tensor, and an autoregression with order one is applied.
The corresponding coefficient tensor has the order of six, and the number of free parameters will be larger than $5^6=15625$ even if Tucker ranks are as small as five; see Section \ref{adv:TT} for more details.
In fact, these successful empirical applications of tensor regression with Tucker decomposition are all limited to the data with very low order tensors.

Tensor train (TT) decomposition \citep{oseledetsTR} is a newly proposed dimension reduction tool for tensor compressing, and it inherits advantages from both CP and Tucker decomposition.
Specifically, TT decomposition can compress tensors as significantly as CP decomposition, while its calculation is as stable as Tucker decomposition; see \cite{oseledetsTR}. 
Due to its better performance in practice, TT decomposition has been widely used in many areas, including machine learning \citep{Novikov2015,Yang2017,Kim2019,Su2020}, physics and quantum computation \citep{orus2019tensor,bravyi2021classical}, signal processing \citep{cichocki2015tensor}, and many others.
On the other hand, it has an obvious drawback and, comparing with CP and Tucker decomposition, it is much harder to interpret the components from TT decomposition.
\cite{Zhou2020} proposed a tensor train orthogonal iteration (TTOI) algorithm to conduct low-rank TT approximation for noisy high order tensors and, as a byproduct, the relationship between the sequential matricization of a tensor and the components of its TT decomposition was also established. This opens a door to apply TT decomposition to statistical problems.

The first contribution of this paper is to revise the classical TT decomposition by introducing orthogonality to some matrices in Section \ref{sec:2-1}, and then tensor regression with the new TT decomposition can be well interpreted from statistical perspectives in Section \ref{adv:TT}.
When applying to tensor regression, both Tucker and TT decomposition first summarize responses and predictors into factors and then regress response factors on predictor ones, while CP decomposition does not have such interpretation.
On the other hand, comparing with Tucker decomposition, TT decomposition has three advantages: (i.) it exactly matches the data with nested, or hierarchical, structures since its factor extraction proceeds sequentially along modes; (ii.) even for the data with factorial or nested-factorial structures, TT decomposition will lead to much less number of factors by taking into account the coherence among modes of response or predictor tensors, and hence the resulting model can be better interpreted; and (iii.) TT decomposition regresses the response factors on predictor ones in the element-wise sense, and it hence can compress the parameter space dramatically.
\cite{Liu2020} considered the classical TT decomposition to tensor regression, while there is no interpretation and theoretical justification.
This paper also has another three main contributions:
\begin{itemize}
	\item[(a)] The ordinary least squares (OLS) estimation is considered for tensor regression with the new TT decomposition in Section \ref{sec:tt}, and its nonasymptotic properties are established. As a byproduct, the covering number for tensors with TT decomposition is also derived, and it may be of independent interest.
	\item[(b)] The new methodology is applied to time series data, resulting in a new tensor autoregressive model in Section \ref{sec:TT-AR}. Since time series data have special structures and probabilistic properties, it is nontrivial to arrange TT decomposition into the model and to derive the nonasymptotic properties.
	\item[(c)] A Riemannian gradient descent algorithm is introduced to search for estimators in Section \ref{Implementation}, and its theoretical justification is discussed. Moreover, two selection methods for TT ranks are proposed with their selection consistency being justified asymptotically.
\end{itemize}

In addition, Section \ref{sim} conducts simulation experiments to evaluate the finite-sample performance
of the proposed methodology, and its usefulness is further demonstrated by a real example in Section \ref{realdata}. A short conclusion and discussion is given in Section \ref{sec:7}, and technical proofs are provided in a separate supplementary file.

\section{Tensor train decomposition and tensor regression}\label{sec:2}
\subsection{Tensor decomposition}\label{sec:2-1}	

This subsection introduces three tensor decomposition techniques: \textit{Tucker, canonical polyadic (CP) and tensor train (TT) decomposition}; see Section A.1 of the supplementary file and \cite{Kolda2009} for the introduction of tensor notations and algebra.

For a $d$-th order tensor $\cm{X}\in\mathbb{R}^{p_1\times p_2\times\cdots \times p_d}$, denote by $[\cm{X}]_{(s)}$ and $[\cm{X}]_{s}$ the mode matricization and sequential matricization at mode $s$ with $1\leq s\leq d$, respectively.
There are two commonly used dimension reduction tools for tensors in the literature: Tucker and CP decomposition.
Suppose that, for each $1\leq i\leq d$, mode-$i$ matricization of $\cm{X}$ has a low rank, i.e. $r_i=\rank([\cm{X}]_{(i)})$, and then the tensor $\cm{X}$ is said to have Tucker ranks $(r_1,r_2,\ldots,r_d)$.
As a result, there exists a Tucker decomposition,
	\begin{equation}\label{Tucker-format}
	\cm{X}=\cm{G}\times_1\bm{U}_1\times_2\bm{U}_2\times_3 \cdots\times_d\bm{U}_d,
\end{equation}
where $\cm{G}\in\mathbb{R}^{r_1\times r_2\times\cdots\times r_d}$ is the core tensor and $\bm{U}_i\in\mathbb{R}^{p_i\times r_i}$ with $1\leq i\leq d$ are factor matrices.
Accordingly, for each $1\leq i\leq d$, its mode-$i$ matricization could be expressed as
\[
[\cm{X}]_{(i)} = \bm{U}_i[\cm{G}]_{(i)}(\bm{U}_d\otimes\cdots\otimes\bm{U}_{i+1}\otimes\bm{U}_{i-1}\otimes\cdots\otimes\bm{U}_1)^\top,
\]
where $\otimes$ denotes Kronecker product.
Note that Tucker decomposition is not unique, since
$\cm{X}=\cm{G}\times_1\bm{U}_1\times_2\cdots\times_d\bm{U}_d=(\cm{G}\times_1\bm{O}_1\times_2\cdots\times_d\bm{O}_d)\times_1 (\bm{U}_1 \bm{O}_1^{-1}) \times_2 \cdots \times_d(\bm{U}_d \bm{O}_d^{-1})$
for any invertible matrices $\bm{O}_i\in\mathbb{R}^{r_i\times r_i}$ with $1\leq i\leq d$. We can consider the higher order singular value decomposition (HOSVD) of $\cm{X}$, a special Tucker decomposition defined by choosing $\bm{U}_i$ as the tall matrix consisting of the top $r_i$ left singular vectors of $[\cm{X}]_{(i)}$ and then setting $\cm{G}=\cm{X}\times_1\bm{U}_1^\top \times_2\cdots\times_d\bm{U}_d^{\top}$.
Note that $\bm{U}_i$'s are orthonormal, i.e. $\bm{U}_i^\top \bm{U}_i=\bm{I}_{r_i}$ with $1\leq i\leq d$, and their corresponding projection matrices, $\bm{U}_i \bm{U}_i^\top$'s, can be uniquely defined.

Tucker decomposition has a wonderful performance for lower order tensors, especially for third order tensors, while it loses the efficiency very quickly as $d$ increases.
This is due to the fact that Tucker decomposition fails to compress the space dramatically \citep{oseledetsTR}. In the meanwhile, {CP decomposition is the best candidate for this scenario}, and it has the form of
\begin{equation}\label{CP-format}
	\cm{X}=\sum_{j=1}^rg_{j}\bm{u}_1(j)\circ\bm{u}_2(j)\circ\cdots \circ\bm{u}_d(j),
\end{equation}
where $\circ$ denotes the outer product, $\bm{u}_i(j)\in\mathbb{R}^{p_i}$ and $\|\bm{u}_i(j)\|_2=1$ for all $1\leq i\leq d$ and $1\leq j\leq r$, and $\bm{U}_i=(\bm{u}_i(1),\ldots,\bm{u}_i(r))\in\mathbb{R}^{p_i\times r}$ with $1\leq i\leq d$ are factor matrices.
{Not like Tucker decomposition, the factor matrices in CP decomposition can be uniquely identified under some regularity conditions \citep{Kolda2009}.
However, when establishing theoretical properties for tensor regression with coefficient tensors having CP decomposition, we may need to orthogonalize $\bm{U}_i$'s in order to count the corresponding covering number, and this will result in a Tucker decomposition in general; see Remark \ref{rem1} for more details.
As a result, its derived statistical properties may not be better than those with Tucker decomposition.} Moreover, CP decomposition is well known to be unstable numerically \citep{Oseledets2010}.

The recently proposed TT decomposition \citep{oseledetsTR} is another dimension reduction tool for tensors.
This tool is stable like Tucker decomposition, while {it can compress the space almost as significantly as CP decomposition}.
Specifically, each element of tensor $\cm{X}\in\mathbb{R}^{p_1\times p_2\times\cdots \times p_d}$ can be decomposed into
\begin{equation}\label{TT-format}
	\cm{X}_{i_1,i_2,\ldots, i_d}=\bm{G}_1(i_1)\bm{G}_2(i_2)\cdots\bm{G}_{d-1}(i_{d-1})\bm{G}_d(i_d),
\end{equation}
where $\bm{G}_{k}(i_{k})$ is an $r_{k-1}\times r_{k}$ matrix for $1\leq k\leq d$, and $r_0=r_d=1$.
For $2\leq k\leq d-1$, we stack these matrices $\bm{G}_{k}(i_{k})$'s into a tensor $\cm{G}_{k}\in\mathbb{R}^{r_{k-1}\times p_{k}\times r_{k}}$ such that $[\cm{G}_{k}]_{(3)}=(\bm{G}_k^\top(1),\ldots,\bm{G}_k^\top(p_k))$. Moreover, let $\bm{G}_{1}=(\bm{G}_{1}^\top(1),\ldots,\bm{G}_1^\top({p_1}))^\top\in\mathbb{R}^{p_{1}\times r_{1}}$ and $\bm{G}_{d}=(\bm{G}_{d}(1),\ldots,\bm{G}_d({p_d}))^\top\in\mathbb{R}^{p_{d}\times r_{d-1}}$, and we call $\bm{G}_1,\cm{G}_2 \cdots\cm{G}_{d-1}, \bm{G}_d$ the TT cores.
From \eqref{TT-format}, the element of $\cm{X}$ can be obtained by multiplying the $i_{1}$-th vector of $\bm{G}_{1}$, $i_{2}$-th matrix of $\cm{G}_{2}$, $\ldots$, $i_{d-1}$-th matrix of $\cm{G}_{d-1}$, and the $i_{d}$-th vector of $\bm{G}_{d}$ sequentially. It is like a ``train", and hence the name of tensor train decomposition. 
Moreover, the $j$-th left part of $\cm{X}$ is defined as a $\prod_{i=1}^jp_i$-by-$r_j$ matrix, denoted by $\bm{G}^{\leq j}$, and its $i_1\cdots i_j$-th row for each $1\leq i_l\leq p_l$ with $1\leq l\leq j$ has the form of
$$\bm{G}^{\leq j}(i_1\cdots i_j, :) := \bm{G}_1(i_1)\bm{G}_2(i_2)\cdots\bm{G}_j(i_j).$$
Similarly, the $j$-th right part of $\cm{X}$ is a $r_{j-1}$-by-$\prod_{i=j}^dp_i$ matrix, $\bm{G}^{\geq j}$, and its $i_j\cdots i_d$-th column is $\bm{G}^{\geq j}(:,i_1\cdots i_d) := \bm{G}_j(i_j)\bm{G}_{j+1}(i_{j+1})\cdots\bm{G}_d(i_d)$, where $1\leq j\leq d$ and $\bm{G}^{\leq0}=\bm{G}^{\geq d+1}=[1]$.

The TT ranks are defined as $(r_1,\ldots,r_{d-1})$, and it can be verified that $r_i=\rank([\cm{X}]_i)$ for all $1\leq i\leq d-1$.
From Lemma 3.1 in \cite{Zhou2020}, its sequential matricization has a form of
\[
[\cm{X}]_k=(\bm{I}_{p_2\cdots p_k}\otimes\bm{G}_1)\cdots(\bm{I}_{p_k}\otimes[\cm{G}_{k-1}]_2)[\cm{G}_k]_2[\cm{G}_{k+1}]_1
	([\cm{G}_{k+2}]_1\otimes\bm{I}_{p_{k+1}})\cdots(\bm{G}_{d}^\top\otimes\bm{I}_{p_{k+1}\cdots p_{d-1}})
\]
for each $1\leq k\leq d-1$, and we denote it by $\cm{X}=[[\bm{G}_1, \cm{G}_2, \ldots, \cm{G}_{d-1},\bm{G}_d]]$ for simplicity. Note that, not like the form at \eqref{TT-format}, this form makes it possible to statistically interpret TT decomposition.
{Similarly, TT decomposition is not unique, since
$\cm{X}=[[\bm{G}_1\bm{O}_1, \cm{G}_2\times_1\bm{O}_1^{-1}\times_3\bm{O}_2^\top, \ldots,\cm{G}_{d-1}\times_1\bm{O}_{d-2}^{-1}\times_3\bm{O}_{d-1}^\top,\bm{G}_d(\bm{O}_{d-1}^{-1})^\top]]$
for any invertible matrices $\bm{O}_i\in\mathbb{R}^{r_i\times r_i}$ with $1\leq i\leq d-1$.}
To fix the problem, this paper introduces another variant, by adding the orthogonality to some matrices, such that TT decomposition can be applied to tensor regression with nice statistical interpretation in the next subsection.

\begin{proposition}\label{prop1}
	For a fixed $1\leq k\leq d-1$, the  $k$th sequential matricization of tensor $\cm{X}$ at \eqref{TT-format} can be decomposed into
	\begin{equation}\label{TTdecomposition}
		[\cm{X}]_k=(\bm{I}_{p_2\cdots p_k}\otimes\bm{G}_1)\cdots(\bm{I}_{p_k}\otimes[\cm{G}_{k-1}]_2)[\cm{G}_k]_2\bm{\Sigma}[\cm{G}_{k+1}]_1
		([\cm{G}_{k+2}]_1\otimes\bm{I}_{p_{k+1}})\cdots(\bm{G}_{d}^\top\otimes\bm{I}_{p_{k+1}\cdots p_{d-1}}),
	\end{equation}
	where $\bm{G}_1^\top\bm{G}_1=\bm{I}_{r_1}$, $[\cm{G}_i]_2^\top[\cm{G}_i]_2=\bm{I}_{r_i}$ for $2\leq i\leq k$, $[\cm{G}_i]_1[\cm{G}_i]_1^\top=\bm{I}_{r_{i-1}}$ for $k+1\leq i\leq d-1$, $\bm{G}_d^\top\bm{G}_d=\bm{I}_{r_{d-1}}$, and $\bm{\Sigma}\in\mathbb{R}^{r_k\times r_k}$ is a diagonal weight matrix.
\end{proposition}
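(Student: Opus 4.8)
The plan is to bring the given TT representation into a ``mixed-canonical'' form by gauge transformations, and then read \eqref{TTdecomposition} off Lemma 3.1 of \cite{Zhou2020}. The elementary mechanism is that \eqref{TT-format} is invariant under absorbing a matrix across a bond: for invertible $\bm{M}$, replacing $\bm{G}_i(i_i)\mapsto\bm{G}_i(i_i)\bm{M}$ and $\bm{G}_{i+1}(i_{i+1})\mapsto\bm{M}^{-1}\bm{G}_{i+1}(i_{i+1})$ (equivalently $\cm{G}_i\leftarrow\cm{G}_i\times_3\bm{M}^\top$ and $\cm{G}_{i+1}\leftarrow\cm{G}_{i+1}\times_1\bm{M}^{-1}$, with ordinary matrix products at the two boundary cores $\bm{G}_1,\bm{G}_d$) leaves $\cm{X}$ unchanged, because the product $\cdots\bm{G}_i(i_i)\bm{M}\bm{M}^{-1}\bm{G}_{i+1}(i_{i+1})\cdots$ telescopes; since $r_i=\rank([\cm{X}]_i)$, every square triangular factor arising below is invertible, so this applies throughout.

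\textbf{Left sweep.} Process cores $1,\dots,k$ from left to right. QR-factor $\bm{G}_1=\bm{Q}_1\bm{R}_1$ with $\bm{Q}_1^\top\bm{Q}_1=\bm{I}_{r_1}$ and $\bm{R}_1\in\mathbb{R}^{r_1\times r_1}$, and keep $\bm{Q}_1$ as the new $\bm{G}_1$; then, for $i=2,\dots,k$ in order, glue the carried-over factor into the left of the current $i$th core, QR-factor its $r_{i-1}p_i\times r_i$ matricization $[\cm{G}_i]_2=\bm{Q}_i\bm{R}_i$ with $\bm{Q}_i^\top\bm{Q}_i=\bm{I}_{r_i}$ and $\bm{R}_i\in\mathbb{R}^{r_i\times r_i}$, keep the reshaped $\bm{Q}_i$ as the new $\cm{G}_i$ (so that $[\cm{G}_i]_2^\top[\cm{G}_i]_2=\bm{I}_{r_i}$), and carry $\bm{R}_i$ rightward. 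The lone leftover $\bm{R}_k\in\mathbb{R}^{r_k\times r_k}$ is set aside on the bond between cores $k$ and $k+1$, and no core with index larger than $k$ is touched. These QR steps are well posed because $p_1\ge r_1$ and $r_{i-1}p_i\ge r_i$, the standard TT-rank inequalities implied by $r_i=\rank([\cm{X}]_i)$.

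\textbf{Right sweep and central SVD.} Do the mirror image on cores $d,d-1,\dots,k+1$, from right to left: QR-factor $\bm{G}_d=\widetilde{\bm{G}}_d\bm{R}_d$ with $\widetilde{\bm{G}}_d^\top\widetilde{\bm{G}}_d=\bm{I}_{r_{d-1}}$ and keep $\widetilde{\bm{G}}_d$ as the new $\bm{G}_d$; then, for $i=d-1,\dots,k+1$ in order, glue the carried-over factor into the right of the current $i$th core, LQ-factor its $r_{i-1}\times p_i r_i$ matricization $[\cm{G}_i]_1=\bm{L}_i\bm{Q}_i$ with $\bm{Q}_i\bm{Q}_i^\top=\bm{I}_{r_{i-1}}$, keep the reshaped $\bm{Q}_i$ as the new $\cm{G}_i$ (so that $[\cm{G}_i]_1[\cm{G}_i]_1^\top=\bm{I}_{r_{i-1}}$), and carry the triangular factor leftward. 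This is well posed because $p_d\ge r_{d-1}$ and $r_{i-1}\le r_i p_i$, and no core with index at most $k$ is touched, so the two sweeps are independent. The lone leftover of this sweep, an $r_k\times r_k$ matrix $\bm{L}_{k+1}$, joins $\bm{R}_k$ on the bond between cores $k$ and $k+1$, which therefore carries $\bm{M}:=\bm{R}_k\bm{L}_{k+1}$. Finally, take an SVD $\bm{M}=\bm{U}_M\bm{\Sigma}\bm{V}_M^\top$ with $\bm{\Sigma}\in\mathbb{R}^{r_k\times r_k}$ diagonal and $\bm{U}_M,\bm{V}_M$ orthogonal, absorb $\bm{U}_M$ into $\cm{G}_k$ from the right and $\bm{V}_M^\top$ into $\cm{G}_{k+1}$ from the left, and leave $\bm{\Sigma}$ on the bond; orthogonality of $\bm{U}_M,\bm{V}_M$ makes $[\cm{G}_k]_2\mapsto[\cm{G}_k]_2\bm{U}_M$ stay left-orthogonal and $[\cm{G}_{k+1}]_1\mapsto\bm{V}_M^\top[\cm{G}_{k+1}]_1$ stay right-orthogonal, so every orthogonality relation in the statement now holds.

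\textbf{Reading off the formula, and the main difficulty.} After all this, $\cm{X}$ has the elementwise form $\bm{G}_1(i_1)\cdots\bm{G}_k(i_k)\,\bm{\Sigma}\,\bm{G}_{k+1}(i_{k+1})\cdots\bm{G}_d(i_d)$ in terms of the updated cores. Folding $\bm{\Sigma}$ momentarily into the left of $\cm{G}_{k+1}$ produces an ordinary TT decomposition of the same $\cm{X}$ whose new $(k+1)$th core $\cm{H}$ obeys $[\cm{H}]_1=\bm{\Sigma}[\cm{G}_{k+1}]_1$; applying Lemma 3.1 of \cite{Zhou2020} to that decomposition turns its middle factor $[\cm{G}_k]_2[\cm{H}]_1$ into $[\cm{G}_k]_2\bm{\Sigma}[\cm{G}_{k+1}]_1$, which is exactly \eqref{TTdecomposition}. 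The mathematical content is light, and the only place that needs real care is the bookkeeping: one must verify that each ``absorb across a bond'' step, realized by the appropriate mode-$1$ or mode-$3$ multiplication, both leaves $\cm{X}$ invariant and modifies only the neighbour it should, so that the two sweeps do not interfere and both one-sided orthogonality types survive the final $\bm{U}_M,\bm{V}_M$ absorptions; and one must confirm that every QR/LQ factorization has a square triangular factor, which is precisely where the identification $r_i=\rank([\cm{X}]_i)$ (equivalently, the TT-rank inequalities) is used.
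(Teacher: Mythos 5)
Your proposal is correct and follows essentially the same route as the paper: the paper also constructs the mixed-canonical form by orthogonalizing the cores from both ends toward the bond between $k$ and $k+1$ (via the SVD sweeps of Algorithm~\ref{alg:TT1} rather than your QR/LQ sweeps plus one central SVD, an immaterial difference) and then reads the matricized formula \eqref{TTdecomposition} off Lemma 3.1 of \cite{Zhou2020}. Your write-up is in fact more explicit than the paper's, which largely delegates the construction to Algorithm~\ref{alg:TT1} after noting the gauge non-uniqueness you exploit.
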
	

The above decomposition can be conducted by Algorithm S.1 in the supplementary file.
For a tensor satisfying Proposition \ref{prop1}, we denote it by $\cm{X}=[[\bm{G}_1,\cm{G}_2,\ldots,\cm{G}_{k},\bm{\Sigma},\cm{G}_{k+1},\ldots,\cm{G}_{d-1},\bm{G}_{d}]]$ for simplicity.

In addition, this paper denotes vectors by boldface small letters, e.g. $\bm{x}$; matrices by boldface capital letters, e.g. $\bm{X}$; tensors by boldface Euler capital letters, e.g. $\cm{X}$.  For a generic matrix $\bm{X}$, we denote by $\bm{X}^\top$, $\norm{\bm{X}}_{\mathrm{F}}$, $\norm{\bm{X}}_2$, and $\vectorize(\bm{X})$ its transpose, Frobenius norm, spectral norm, and a long vector obtained by stacking all its columns, respectively. If $\bm{X}$ is further a square matrix, then denote its minimum and maximum eigenvalue by $\lambda_{\min}(\bm{X})$ and $\lambda_{\max}(\bm{X})$, respectively.
For two real-valued sequences $x_k$ and $y_k$, $x_k\gtrsim y_k$ if there exists a $C>0$ such that $x_k\geq Cy_k$ for all $k$. In addition, we write $x_k\asymp y_k$ if $x_k\gtrsim y_k$ and $y_k\gtrsim x_k$.	
Moreover, the Frobenius norm of $\cm{X}$ is defined as $\norm{\cm{X}}_{\mathrm{F}}=\sqrt{\langle\cm{X},\cm{X}\rangle}$, where $\langle\cdot,\cdot\rangle$ is the inner product.
	
\subsection{Advantages of tensor train decomposition for tensor regression}\label{adv:TT}

This subsection applies tensor train (TT) decomposition at Proposition \ref{prop1} to a simple tensor regression such that its statistical interpretation can be well demonstrated.
Its advantages are also presented from both physical interpretations and efficiency over Tucker decomposition, which has been commonly used for tensor regression \citep{Raskutti2019,Wang2021}.

Consider a simple tensor regression with matrix-valued responses and predictors,
\begin{equation}\label{simple-regression}
\bm{Y}_i = \langle\cm{A},\bm{X}_i\rangle+\bm{E}_i \hspace{2mm}\text{for}\hspace{2mm} 1\leq i\leq N,
\end{equation}
where $\bm{Y}_i\in\mathbb{R}^{q_1\times q_2}$, $\bm{X}_i\in\mathbb{R}^{p_1\times p_2}$, $\cm{A}\in\mathbb{R}^{q_1\times q_{2}\times p_1\times p_2}$  is the coefficient tensor,  $\bm{E}_i\in\mathbb{R}^{q_1\times q_2}$ is the error term, and $N$ is the sample size.
We first apply Tucker decomposition to model \eqref{simple-regression} for dimension reduction.
Suppose that the coefficient tensor $\cm{A}$ has Tucker ranks $(r_1,r_2,r_3,r_4)$, and we then have a Tucker decomposition,
$\cm{A}=\cm{G}\times_{i=1}^{4}\bm{U}_{i}$, where $\bm{U}_{i}\in\mathbb{R}^{q_i\times r_i}$ and $\bm{U}_{j}\in\mathbb{R}^{p_{j-2}\times r_j}$ with $i\in\{1,2\}$ and $j\in\{3,4\}$ are orthonormal factor matrices, and $\cm{G}\in\mathbb{R}^{r_{1}\times r_{2}\times r_3\times r_4}$ is the core tensor.
As a result, model \eqref{simple-regression} can be rewritten into
\begin{equation}\label{model-tucker}
	\bm{U}_1^{\top}\bm{Y}_i\bm{U}_2=\langle\cm{G}, \bm{U}_3^{\top}\bm{X}_i\bm{U}_4\rangle+\bm{U}_1^{\top}\bm{E}_i\bm{U}_2,
\end{equation}
where dimension reduction is first conducted to the responses $\bm{Y}_i\in\mathbb{R}^{q_1\times q_2}$ and predictors $\bm{X}_i\in\mathbb{R}^{p_1\times p_2}$, and we then regress the resulting lower dimensional response factors $\bm{U}_1^{\top}\bm{Y}_i\bm{U}_2\in\mathbb{R}^{r_1\times r_2}$ on predictor factors $\bm{U}_3^{\top}\bm{X}_i\bm{U}_4\in\mathbb{R}^{r_3\times r_4}$; see Figure B.1 in the supplementary file for the illustration.


We next conduct dimension reduction to model \eqref{simple-regression} by TT decomposition.
Suppose that $\cm{A}$ has TT ranks $(r_{1},r_{2},r_3)$, and it then has a TT decomposition $\cm{A}=[[\bm{G}_1,\cm{G}_2,\bm{\Sigma},\cm{G}_{3},\bm{G}_{4}]]$, i.e.
\[
[\cm{A}]_2=(\bm{I}_{q_2}\otimes\bm{G}_1) [\cm{G}_2]_2\bm{\Sigma}[\cm{G}_{3}]_1 (\bm{G}_{4}^\top\otimes\bm{I}_{p_{1}}),
\]
where $\bm{G}_1^\top\bm{G}_1=\bm{I}_{r_1}$, $[\cm{G}_2]_2^\top[\cm{G}_2]_2=[\cm{G}_3]_1[\cm{G}_3]_1^\top=\bm{I}_{r_{2}}$, $\bm{G}_4^\top\bm{G}_4=\bm{I}_{r_3}$, and $\bm{\Sigma}\in\mathbb{R}^{r_2\times r_2}$ is a diagonal matrix.
From model \eqref{simple-regression},
$\vectorize(\bm{Y}_i) = [\cm{A}]_2\vectorize(\bm{X}_i)+\vectorize(\bm{E}_i)$,
and it then holds that
\begin{equation}\label{model-tt}
	[\cm{G}_2]_{2}^\top\vectorize(\bm{G}_1^{\top}\bm{Y}_i)= \bm{\Sigma}[\cm{G}_{3}]_1
	\vectorize(\bm{X}_i\bm{G}_4)+[\cm{G}_2]_{2}^\top\vectorize(\bm{G}_1^{\top}\bm{E}_i),
\end{equation}
where $[\cm{G}_2]_{2}^\top\vectorize(\bm{G}_1^{\top}\bm{Y}_i)= \langle \bm{G}_1^{\top}\bm{Y}_i, \cm{G}_2 \rangle$, and $[\cm{G}_{3}]_1
\vectorize(\bm{X}_i\bm{G}_4) =\langle \cm{G}_{3},\bm{X}_i\bm{G}_4\rangle$; see Figure B.1 in the supplementary file for its illustration.
It can be seen that the responses $\bm{Y}_i\in\mathbb{R}^{q_1\times q_2}$ and predictors $\bm{X}_i\in\mathbb{R}^{p_1\times p_2}$ are first summarized into response factors $[\cm{G}_2]_{2}^\top\vectorize(\bm{G}_1^{\top}\bm{Y}_i)\in\mathbb{R}^{r_2}$ and predictor factors $[\cm{G}_{3}]_1
\vectorize(\bm{X}_i\bm{G}_4)\in\mathbb{R}^{r_2}$, respectively, and we then regress the response factors on predictor ones in the element-wise sense since the coefficient matrix $\bm{\Sigma}\in\mathbb{R}^{r_2\times r_2}$ is diagonal.

\begin{figure}[ht]
	\centering
	\includegraphics[width=0.8\linewidth,height=0.22\textheight]{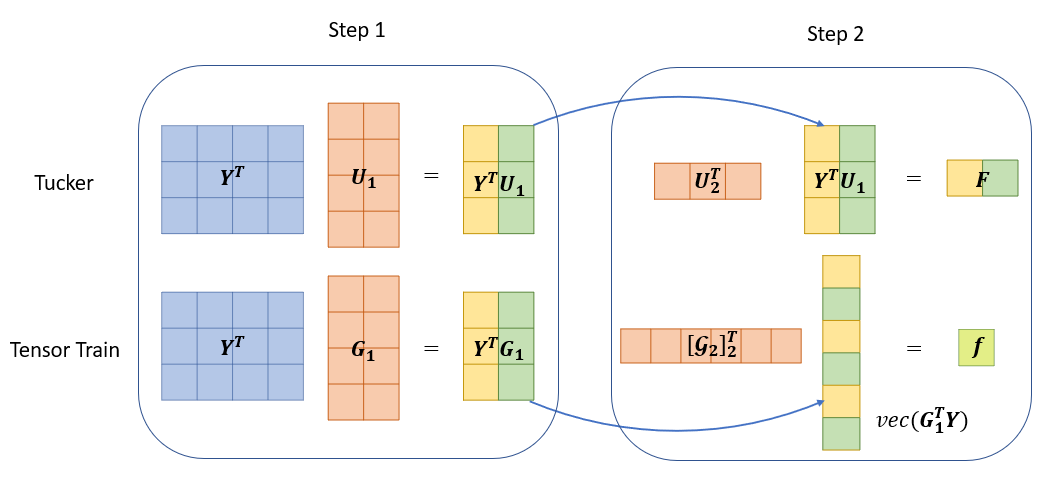}	\caption{\label{illu-factor}Illustration on how Tucker and TT decomposition extract response factors for a simple tensor regression with matrix-valued responses.}
\end{figure}

On one hand, both models via Tucker and TT decomposition share similar two-stage procedures: summarizing responses and predictors into factors, and then regressing response factors on predictor ones.
On the other hand, the two procedures are different in details. We next demonstrate how Tucker and TT decomposition conduct dimension reduction for responses $\bm{Y}$ in the first stage.
Specifically, from \eqref{model-tucker} and \eqref{model-tt}, the $q_1$ variables along the first mode of $\bm{Y}\in\mathbb{R}^{q_1\times q_2}$ are first summarized into $r_1$ factors, $\bm{U}_1^{\top}\bm{Y}_i$ and $\bm{G}_1^{\top}\bm{Y}_i$, by Tucker and TT decomposition in the same way.
When further compressing the second mode of $\bm{Y}\in\mathbb{R}^{q_1\times q_2}$, Tucker decomposition conducts dimension reduction for the $q_2$ variables within each of summarized $r_1$ factors, while TT decomposition does it for all summarized $q_2r_1$ factors; see Figure \ref{illu-factor} for the illustration.
Note that TT decomposition extracts response and predictor factors in a reverse order.
Moreover, CP decomposition does not enjoy the above factor modeling interpretation. 


TT decomposition is more suitable for the data with nested, or hierarchical, structures, and it also enjoys advantages for the data with a seemingly factorial structure by creating much less number of factors.
For example, consider the monthly international import trade data from January 2010 to December 2019, and we choose $q_1=6$ product categories and $q_2=6$ countries such that these countries have different preferred product categories for importation.
The predictors are set to $\bm{X}_i=\bm{Y}_{i-1}^\top$. TT and Tucker decomposition are applied to coefficient tensor $\cm{A}$ with ranks $(2,1,1)$ and $(2,2,1,1)$, respectively, such that they have similar estimation errors and the same number of parameters.
Note that TT decomposition leads to one response factor, while there are four factors for Tucker decomposition.
Specifically, from \eqref{model-tt}, TT decomposition first employs $\bm{G}_1\in\mathbb{R}^{6\times 2}$ to extract $r_1=2$ factors for each country from the product feature through the operation $\bm{G}_1^\top\bm{Y}_i$. From the left panel of Figure \ref{trade:tt}, the first factor represents a contrast between industrial materials and textile materials \& products, while the second one pays more attention to industrial materials.
Note that there are totally 12 factors, $\vectorize(\bm{G}_1^\top\bm{Y}_i)\in\mathbb{R}^{12}$, and they are further summarized into one response factor by using $[\cm{G}_2]_2\in\mathbb{R}^{12\times 1}$; see the middle panel of Figure \ref{trade:tt}.
The larger weight at the second column corresponds to industrial materials at USA, while those at the eleventh and twelfth columns imply the larger contribution of Brazil.
The right panel of Figure \ref{trade:tt} presents the loading matrix of the response factor, which is actually the combination of $\bm{G}_1$ and $\cm{G}_2$, and the products and countries are coherent to each other obviously.
As a result, at least four factors from Tucker decomposition are needed to obtain a similar representing power; see Figure B.3 in the supplementary file for details.

\begin{figure}[ht]
	\centering
	\includegraphics[width=\textwidth]{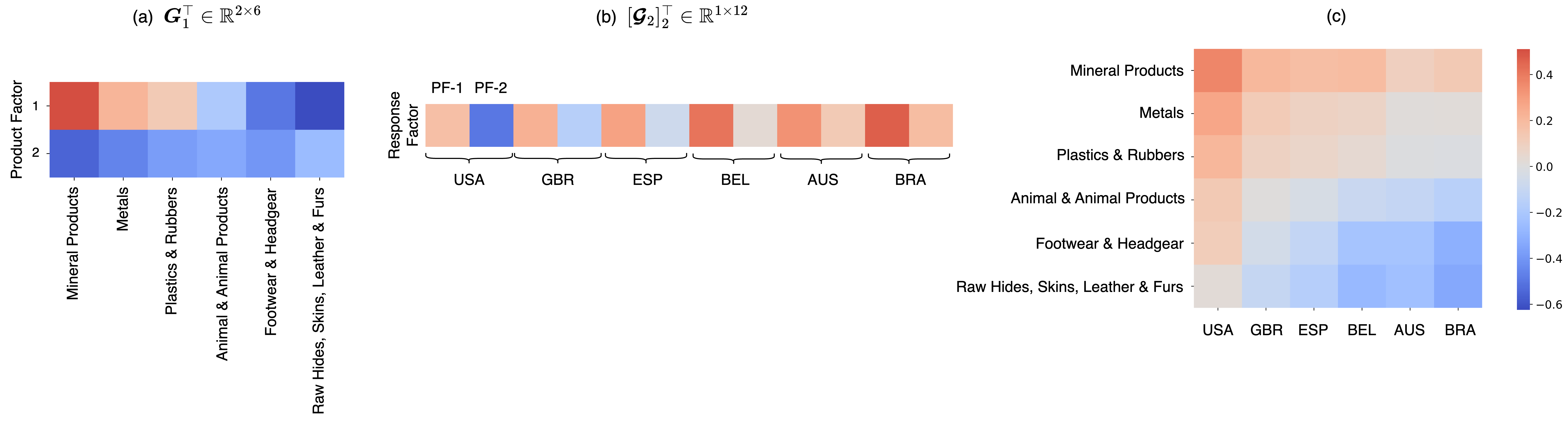}
	\caption{Heatmaps of $\bm{G}_1^\top$ (left panel), $[\cm{G}_2]_2^\top$ (middle panel) and the loading matrix (right panel) of one response factor from TT decomposition. Six countries are USA, United Kingdom (GBR), Spain (ESP), Belgium (BEL), Australia (AUS) and Brazil (BRA).}
	\label{trade:tt}
\end{figure}


Finally, the models via Tucker and TT decomposition will involve different numbers of parameters in the second stage, and they correspond to the sizes of $\cm{G}$ at \eqref{model-tucker} and $\bm{\Sigma}$ at \eqref{model-tt}, which are $\prod_{i=1}^4r_i$ and $r_2$, respectively.
Consider a model with both responses and predictors being third-order tensors, and suppose that the coefficient tensor $\cm{A}$ has Tucker ranks $(r_1,\cdots,r_6)$ with $r_i$'s being as small as five. As a result, the size of $\cm{G}$ is $5^6=15625$.
In fact, Tucker decomposition has a poor performance for a higher order tensor \citep{oseledetsTR}, and the situation is much more serious for regression problems.
In the meanwhile, we can always control the size of $\bm{\Sigma}$ within a reasonable range.
This further confirms the necessity of TT decomposition for tensor regression, especially for responses and predictors with high order tensors.

\section{Tensor train regression for high-dimensional data}\label{sec:3}
\subsection{Tensor train regression}\label{sec:tt}

This subsection considers a general regression with tensor-valued responses and predictors,
\begin{equation}\label{regmodel}
	\cm{Y}_i = \langle\cm{A},\cm{X}_i\rangle+\cm{E}_i \hspace{2mm}\text{for}\hspace{2mm} 1\leq i\leq N,
\end{equation}
where $\cm{Y}_i\in\mathbb{R}^{q_1\times q_2\times \cdots\times q_n}$, $\cm{X}_i\in\mathbb{R}^{p_1\times p_2\times\cdots\times p_m}$, $\cm{A}\in\mathbb{R}^{q_1\times\cdots\times q_n\times p_1\times\cdots\times p_m}$ is the coefficient tensor, $\cm{E}_i\in\mathbb{R}^{q_1\times q_2\times\cdots\times q_n}$ is the error term, and $N$ is the sample size.
As in Section \ref{adv:TT}, the responses $\cm{Y}_i$ are supposed to have a nested or hierarchical structure, and its $j$th mode is nested under each level of $(j+1)$th mode for all $1\leq j\leq n-1$.
In the meanwhile, it is common that some modes are arranged in a factorial layout and other modes are nested, and this corresponds to a nested-factorial design in the area of experimental designs \citep[Chapter 14]{Montgomery2009}.
In fact, all modes may even be factorial.
For these cases, the modes are first arranged according to our experiences and, for those modes with no preference, we may simply put the ones with more levels, i.e. larger values of $q_j$'s, at lower places since they will be summarized into lower dimensional factors earlier.
Finally, since the predictor factors are extracted in a reverse order, we assume that the predictors $\cm{X}_i$ also have a nested structure, while its $j$th mode is nested under each level of $(j-1)$th mode for all $2\leq j\leq m$.

Suppose that the coefficient tensor $\cm{A}$ has tensor train (TT) ranks $R=(r_{1},r_{2},\ldots,r_{m+n-1})$, i.e. $r_i=\rank([\cm{A}]_i)$ for all $1\leq i\leq m+n-1$, and then the parameter space can be denoted by
\[
\bm{\Theta}_1(R)=\{\cm{A}\in\mathbb{R}^{q_1\times\cdots\times q_n\times p_1\times\cdots\times p_m}: \rank([\cm{A}]_i)\leq r_i \text{ for all } 1\leq i\leq m+n-1\}.
\]
From Proposition \ref{prop1}, we have $\cm{A}=[[\bm{G}_1,\cm{G}_2,\ldots,\cm{G}_n,\bm{\Sigma},\cm{G}_{n+1},\ldots,\cm{G}_{n+m-1},\bm{G}_{n+m}]]$, i.e.
\[
[\cm{A}]_n=(\bm{I}_{q_2\cdots q_n}\otimes\bm{G}_1)\cdots(\bm{I}_{q_n}\otimes[\cm{G}_{n-1}]_2)[\cm{G}_n]_2\bm{\Sigma}[\cm{G}_{n+1}]_1
([\cm{G}_{n+2}]_1\otimes\bm{I}_{p_{1}})\cdots(\bm{G}_{n+m}^\top\otimes\bm{I}_{p_{1}\cdots p_{m-1}}),
\]
where $\bm{G}_1^\top\bm{G}_1=\bm{I}_{r_1}$, $[\cm{G}_i]_2^\top[\cm{G}_i]_2=\bm{I}_{r_i}$ for $2\leq i\leq n$, $[\cm{G}_i]_1[\cm{G}_i]_1^\top=\bm{I}_{r_{i-1}}$ for $n+1\leq i\leq n+m-1$, $\bm{G}_{n+m}^\top\bm{G}_{n+m}=\bm{I}_{r_{n+m-1}}$, and $\bm{\Sigma}\in\mathbb{R}^{r_n\times r_n}$ is diagonal.
Note that $\vectorize(\cm{Y}_i) = [\cm{A}]_n\vectorize(\cm{X}_i)+\vectorize(\cm{E}_i)$, and then model \eqref{regmodel} can be rewritten into
\begin{align}
	\begin{split}\label{interpretation}
	[\cm{G}_n]_{2}^\top(\bm{I}_{q_n}\otimes[\cm{G}_{n-1}]_{2}^\top)&\cdots(\bm{I}_{q_2\cdots q_n}\otimes\bm{G}_{1}^\top)\vectorize(\cm{Y}_i) \\
	&=\bm{\Sigma} \{[\cm{G}_{n+1}]_1
	([\cm{G}_{n+2}]_1\otimes\bm{I}_{p_{1}})\cdots(\bm{G}_{n+m}^\top\otimes\bm{I}_{p_{1}\cdots p_{m-1}})\vectorize(\cm{X}_i)\}\\
	&\hspace{25mm}+[\cm{G}_n]_{2}^\top(\bm{I}_{q_n}\otimes[\cm{G}_{n-1}]_{2}^\top)\cdots(\bm{I}_{q_2\cdots q_n}\otimes\bm{G}_{1}^\top)\vectorize(\cm{E}_i).
	\end{split}
\end{align}
For simplicity, we call the above model \textit{tensor train regression}.

Similar to the illustrative example in Section \ref{adv:TT}, model \eqref{interpretation} consists of two stages: summarizing factors and conducting regression on the resulting factors in the element-wise sense.
In the first stage, we first perform dimension reduction for $\cm{Y}_i$ along the first mode by summarizing $q_1$ variables into $r_1$ factors through $\bm{G}_1$, then compress the stacked first and second modes by summarizing $r_1q_2$ factors into $r_2$ ones through $\cm{G}_2$, and so on.
The procedure will end until we extract $r_n$ factors from responses.	
On the contrary, the factor-extracting procedure for $\cm{X}_i$ follows a reverse order, and it rolls up from the last mode until the first mode one-by-one. At the end, it will also produce $r_n$ factors from predictors; see Section B.3 of the supplementary file for the illustration on how to extract factors from a third order predictor tensor.
Finally, the second stage regresses $r_n$ response factors on $r_n$ predictor ones with the coefficient matrix $\bm{\Sigma}$ being diagonal, and hence there are only $r_n$ parameters involved.

%

Consider the ordinary least squares (OLS) estimation for linear regression at \eqref{interpretation},
\begin{equation*}
	\cm{\widehat{A}}_{\mathrm{LR}}=\arg\min_{\cm{A}\in\bm{\Theta}_1(R)}\sum_{i=1}^{N}\|\cm{Y}_i-\langle \cm{A}, \cm{X}_i\rangle\|_{\mathrm{F}}^{2}.
\end{equation*}
Let $P=\prod_{i=1}^mp_i$ and $Q=\prod_{i=1}^nq_i$, and $d_{\mathrm{LR}}=\sum_{i=1}^mr_{n+i-1}p_ir_{n+i}+\sum_{i=1}^nr_{i-1}q_ir_i+r_n$ is the corresponding model complexity.
To establish nonasymptotic properties for the OLS estimator, we first make two assumptions on predictors $\bm{x}_i=\vectorize(\cm{X}_i)\in\mathbb{R}^{P}$ and error terms $\bm{e}_i =\vectorize(\cm{E}_{i}) \in\mathbb{R}^{Q}$, respectively, and they both are commonly used in the literature \citep{wainwright2019}.

\begin{assumption}\label{assump:input}
	Predictors $\{\bm{x}_i\}$ are $i.i.d$ with $\mathbb{E}(\bm{x}_i)=0$, $\mathrm{var}(\bm{x}_i)=\bm{\Sigma}_x$ and $\sigma^2$-sub-Gaussian distribution. There exist constants $0<c_{x}<C_{x}<\infty$ such that $c_x\bm{I}_P\leq\bm{\Sigma}_x\leq C_x\bm{I}_P$, where the two quantities depend on the dimensions of $p_i$'s, and they may shrink to zero or diverge to infinity as the dimensions increase.
\end{assumption}

\begin{assumption}\label{assump:error}
	Error terms $\{\bm{e}_i\}$ are $i.i.d$ and, conditional on $\bm{x}_i$, $\bm{e}_i$ has mean zero and follows  $\kappa^2$-sub-Gaussian distribution.
\end{assumption}

\begin{theorem}\label{thm:reg}
	Suppose that Assumptions \ref{assump:input} and \ref{assump:error} hold, and the true coefficient tensor $\cm{A}^*$ has TT ranks $R=(r_1,r_2,\ldots,r_{m+n-1})$.
	If the sample size $N\gtrsim\max \{\sigma^{2}/c_x,\sigma^{4}/c_x^2\}\cdot \{m\log(m\sqrt{C_x/c_x})$
	$(\sum_{i=2}^mr_{n+i-1}p_ir_{n+i}+(p_1+1)r_{n+1})+\sum_{i=1}^n\log(q_i)\}$, then
	\begin{align*}
		\|\cm{\widehat{A}}_{\mathrm{LR}}-\cm{A}^*\|_{\mathrm{F}}&\lesssim \frac{\kappa}{c_x}\sqrt{\frac{C_{x}(m+n)\log(m+n)d_{\mathrm{LR}}}{N}}
	\end{align*}
	with probability at least $1-\exp \{-c(\sum_{i=1}^mp_1+\sum_{i=1}^n\log q_i)\}$, where $c$ is some positive constant.
\end{theorem}

\begin{remark}\label{rem2}
From the above theorem, when $m$ and $n$ are fixed, and the parameters $c_x$, $C_x$, $\sigma^2$ and $\kappa^2$ are bounded away from zero and infinity, we then have  $\|\cm{\widehat{A}}_{\mathrm{LR}}-\cm{A}^*\|_{\mathrm{F}}=O_p(\sqrt{d_{\mathrm{LR}}/N})$.
Moreover, if Tucker decomposition is applied to $\cm{A}$ with Tucker ranks $(r_1,\ldots,r_{m+n})$, then the resulting OLS estimator will have the convergence rate of $\sqrt{d_{\mathrm{Tucker}}/N}$ with the model complexity of $d_{\mathrm{Tucker}}=\prod_{i=1}^{m+n}r_i+\sum_{i=1}^mp_ir_{n+i}+\sum_{i=1}^nq_ir_i$ \citep{Han2020,Wang2021}.
The term of $\prod_{i=1}^{m+n}r_i$ may result in a huge number even for lower ranks, and this actually hinders the application of Tucker decomposition on tensor regression with higher order response and predictor tensors.
Of course, all decomposition methods provide approximation only to real datasets and, when the sample size $N$ of a dataset is large enough, Tucker decomposition may lead to a better fit since it involves more parameters and hence more flexibility of the resulting model.
\end{remark}

\begin{remark}\label{rem1}
For the technical proof of Theorem \ref{thm:reg}, it is the key to derive the covering number for a tensor $\cm{A}$ with TT decomposition, which is of interest independently.
We adapt the proving technique in \cite{candes2011tight} for the derivation, and the TT decomposition at Proposition \ref{prop1} plays a key role in the process.
Specifically, we need to keep one matrix or tensor component for amplitude, while the other orthonormal matrices act as rotation; see Lemma D.3 in the supplementary file for details. 
In fact, this technique has been used to establish the covering number of a tensor with Tucker decomposition \citep{Wang2021}.
On the other hand, we may also want to count the covering number for a tensor with CP decomposition at \eqref{CP-format}, while it will lead to a form of Tucker decomposition after orthonormalizing factor matrices $\bm{U}_i$'s with $1\leq i\leq d$.
In other words, the proving technique in \cite{candes2011tight} will lead to the same covering number for a tensor with CP and Tucker decomposition.
\end{remark}

\subsection{Tensor train autoregression}\label{sec:TT-AR}

We first consider a simple autoregressive (AR) model for tensor-valued time series $\{\cm{Y}_t\}$,
\begin{equation}\label{automodel}
	\cm{Y}_t = \langle\cm{A},\cm{Y}_{t-1}\rangle+\cm{E}_t,
\end{equation}
where $\cm{Y}_t\in\mathbb{R}^{p_1\times p_2\times\cdots\times p_d}$, $\cm{A}\in\mathbb{R}^{p_1\times\cdots\times p_d\times p_1\times\cdots\times p_d}$ is the coefficient tensor, and $\cm{E}_t\in\mathbb{R}^{p_1\times p_2\times\cdots\times p_d}$ is the error term.
As in Section \ref{sec:tt}, we rearrange $\cm{Y}_t$ into a nested structure, and its $j$th mode is nested under each level of $(j+1)$th mode for all $1\leq j\leq d-1$.
Moreover, denote by $\cm{Y}_{t}^{\top}$ the rearrangement of $\cm{Y}_{t}$ such that the $1,2,\ldots,d$-th modes of $\cm{Y}_{t}$ correspond to the $d,\ldots,2,1$-th modes of $\cm{Y}_{t}^{\top}$, respectively. As a result, $\cm{Y}_{t-1}^\top$ also has a nested structure but with a reverse order, and hence model \eqref{automodel} can be reorganized into
\begin{equation*}
	\cm{Y}_t = \langle \mathcal{M}(\cm{A}),\cm{Y}_{t-1}^\top\rangle+\cm{E}_t,
\end{equation*}
where $\mathcal{M}(\cm{A})$ is the rearrangement of $\cm{A}$ according to $\cm{Y}_{t-1}^\top$.
Suppose that the reshaped coefficient tensor $\mathcal{M}(\cm{A})$ has low TT ranks $R=(r_1,r_2,\ldots,r_{2d-1})$, i.e. $r_i=\rank([\mathcal{M}(\cm{A})]_i)$ for all $1\leq i\leq 2d-1$, and the corresponding parameter space can then be denoted by
\[
\bm{\Theta}_2(R)=\{\cm{A}\in\mathbb{R}^{p_1\times\cdots\times p_d\times p_1\times\cdots\times p_d}: \rank([\mathcal{M}(\cm{A})]_i)\leq r_i \text{ for all } 1\leq i\leq 2d-1\}.
\]
For simplicity, we call model \eqref{automodel}, together with the parameter space of $\bm{\Theta}_2(R)$, \textit{tensor train autoregression} with order one, and its physical interpretation is similar to that of tensor regression.

For an observed time series $\{\cm{Y}_0,\cm{Y}_1,\ldots,\cm{Y}_N\}$ generated by model \eqref{automodel} with TT ranks $R=(r_1,r_2,\ldots,r_{2d-1})$, the OLS estimation can be defined as
\begin{equation}\label{AR1}
	\cm{\widehat{A}}_{\mathrm{AR}}=\arg\min_{\cm{A}\in \bm{\Theta}_2(R)}\sum_{t=1}^{N}\|\cm{Y}_{t}-\langle \cm{A}, \cm{Y}_{t-1}\rangle\|_{\mathrm{F}}^{2}.
\end{equation}

\begin{assumption}\label{assump:radius}
	Sequential matricization $[\cm{A}]_d$ has the spectral radius strictly less than one.
\end{assumption}

\begin{assumption}\label{assump:errorauto}
	For the error term, let $\bm{e}_t =\vectorize(\cm{E}_{t}) =\bm{\Sigma}_e^{1/2}\bm{\xi}_t$ and $P=\prod_{i=1}^dp_i$.
	Random vectors $\{\bm{\xi}_t\}$ are $i.i.d.$ with $\mathbb{E}(\bm{\xi}_t)=0$ and $\mathrm{var}(\bm{\xi}_t)=\bm{I}_P$, the entries $(\bm{\xi}_{tj})_{1\leq j\leq P}$ of $\bm{\xi}_t$ are mutually independent and $\kappa^2$-sub-Gaussian distributed, and there exist constants $0<c_{e}<C_{e}<\infty$ such that $c_{e}\bm{I}_P\leq\bm{\Sigma}_e\leq C_{e}\bm{I}_P$.
\end{assumption}

Assumption \ref{assump:radius} is necessary and sufficient for the existence of a unique strictly stationary solution to model \eqref{automodel}. The sub-Gaussian condition in Assumption \ref{assump:errorauto} is weaker than the commonly used Gaussian assumption in the literature \citep{Basu2015}, and it thanks to the established martingale-based concentration bound, which is a nontrivial task for high-dimensional time series and is of interest independently, in the technical proofs.

We next establish the estimation error bound of $\cm{\widehat{A}}_{\mathrm{AR}}$, which will rely on the temporal and cross-sectional dependence of $\{\cm{Y}_t\}$ \citep{Basu2015}.
To this end, we first define a matrix polynomial $\cm{A}(z):=\bm{I}_P-[\cm{A}]_dz$, where $z\in\mathbb{C}$ with $\mathbb{C}$ being the complex space, and its conjugate transpose is denoted by $\bar{\cm{A}}(z)$.
Let
\begin{equation*}
	\mu_{\mathrm{min}}(\cm{A})=\mathrm{min}_{|z|=1}\lambda_{\mathrm{min}}(\bar{\cm{A}}(z)\cm{A}(z)) \hspace{5mm}\text{and}\hspace{5mm} \mu_{\mathrm{max}}(\cm{A})=\mathrm{max}_{|z|=1}\lambda_{\mathrm{max}}(\bar{\cm{A}}(z)\cm{A}(z)).
\end{equation*}
We then denote $d_{\mathrm{AR}}=\sum_{i=1}^d(r_{i-1}p_ir_i+r_{2d-i}p_ir_{2d-i+1})+r_d$, $\kappa_U=C_{e}/\mu_{\mathrm{min}}(\cm{A}^*)$ and $\kappa_L=c_{e}/\mu_{\mathrm{max}}(\cm{A}^*)$, where $\cm{A}^*(z)$ is the polynomial with $\cm{A}=\cm{A}^*$, and $\cm{A}^*$ is the true coefficient tensor.
	
\begin{theorem}\label{thm:errorbound-TR}
		Suppose that Assumptions \ref{assump:radius} and \ref{assump:errorauto} hold, and the reshaped true coefficient tensor $\mathcal{M}(\cm{A}^*)$ has TT ranks $R=(r_1,r_2,\ldots,r_{2d-1})$.
		If the sample size
		 $N\gtrsim\max (\kappa^2\kappa_U/\kappa_L,\kappa^4\kappa_U^2/\kappa_L^2) \cdot d\log(d\sqrt{\kappa_U/\kappa_L})d_{\mathrm{AR}}$, then
		\begin{align*}
			 \|{\cm{\widehat{A}}_{\mathrm{AR}}-\cm{A}^*}\|_{\mathrm{F}}&\lesssim \frac{\kappa}{\kappa_L}\sqrt{\frac{C_{e}\kappa_Ud\log(d)d_{\mathrm{AR}}}{N}}
		\end{align*}
		with probability at least $1-\exp\{-c\sum_{i=1}^dp_i\}$, where $c$ is some positive constant.
\end{theorem}

From the above theorem, when $d$ is fixed, and the parameters $c_e$, $C_e$, $\kappa^2$, $\mu_{\mathrm{min}}(\cm{A}^*)$ and $\mu_{\mathrm{max}}(\cm{A}^*)$ are bounded away from zero and infinity, we then have $\|\cm{\widehat{A}}_{\mathrm{AR}}-\cm{A}^*\|_{\mathrm{F}}=O_{P}(\sqrt{d_{\mathrm{AR}}/N})$, where $d_{\mathrm{AR}}$ is the corresponding model complexity.

The AR model with lag one at model \eqref{automodel} may not be enough to fit the data with more complicated autocorrelation structures, and we next extend it to the case with a general order $p$,
\begin{equation}\label{TT-ar}
	\cm{Y}_t = \langle\cm{A}_1,\cm{Y}_{t-1}\rangle+\langle\cm{A}_2,\cm{Y}_{t-2}\rangle+\ldots+\langle\cm{A}_p,\cm{Y}_{t-p}\rangle+\cm{E}_t.
\end{equation}
We first stack $\cm{Y}_{t-1},\ldots,\cm{Y}_{t-p}$ into a new tensor $\cm{Y}_{t-1:t-p}\in\mathbb{R}^{p\times p_1\times\cdots\times p_d}$ with $(d+1)$ modes, and then $\cm{A}_{1},\ldots,\cm{A}_{p}$ are also stacked into another new tensor $\cm{A}_{1:p}\in\mathbb{R}^{ p_1\times\cdots\times p_d\times p\times p_1\times\cdots\times p_d }$ with $(2d+1)$ modes accordingly.
Note that, for $\cm{Y}_t$, its $j$th mode is nested under each level of $(j+1)$th mode for all $1\leq j\leq d-1$, and we here actually put the new mode of lags at the lowest place in the hierarchical structure.
It certainly can be at the highest place, i.e. all modes are nested under each lag, and the notations can be adjusted accordingly.
On the other hand, as in \cite{Wang2021}, such arrangement will make it convenient to further explore the possible low-rank structure along lags.

Denote by $\cm{Y}_{t-1:t-p}^{\top}$ the rearrangement of $\cm{Y}_{t-1:t-p}$ such that the $1,2,\ldots,(d+1)$-th modes of $\cm{Y}_{t-1:t-p}$ correspond to the $(d+1),\ldots,2,1$-th modes of $\cm{Y}_{t-1:t-p}^{\top}$, respectively. 
As a result, model \eqref{TT-ar} can be rewritten into
\begin{equation}\label{automodel2}
	\cm{Y}_t = \langle\cm{A}_{1:p},\cm{Y}_{t-1:t-p}\rangle+\cm{E}_t \hspace{5mm}\text{or}\hspace{5mm}
	\cm{Y}_t = \langle \mathcal{M}(\cm{A}_{1:p}),\cm{Y}_{t-1:t-p}^\top\rangle+\cm{E}_t,
\end{equation}
where $\mathcal{M}(\cm{A}_{1:p})$ is the rearrangement of $\cm{A}_{1:p}$ according to $\cm{Y}_{t-1:t-p}^\top$.
Suppose that the reshaped coefficient tensor $\mathcal{M}(\cm{A}_{1:p})$ has low TT ranks $R=(r_1,r_2,\ldots,r_{2d})$, i.e. $r_i=\rank([\mathcal{M}(\cm{A})]_i)$ for all $1\leq i\leq 2d$, and then the parameter space for model \eqref{automodel2} can be denoted by
\[
 \bm{\Theta}_3(R)=\{\cm{A}\in\mathbb{R}^{p_1\times\cdots\times p_d\times p\times p_1\times\cdots\times p_d}: \rank([\mathcal{M}(\cm{A})]_i)\leq r_i \text{ for all } 1\leq i\leq 2d\}.
\]
We refer the tensor train autoregression with the order of $p$ to model \eqref{TT-ar} or \eqref{automodel2} with the parameter space of $\bm{\Theta}_3(R)$. For a generated time series $\{\cm{Y}_{1-p},\ldots,\cm{Y}_0,\cm{Y}_1,\ldots,\cm{Y}_N\}$, the OLS estimation is defined as
\begin{equation}\label{ARp}
  	\cm{\widehat{A}}_{1:p}=\arg\min_{\cm{A}\in \bm{\Theta}_3(R)}\sum_{t=1}^{N}\|\cm{Y}_{t}-\langle \cm{A}_{1:p}, \cm{Y}_{t-1:t-p}\rangle\|_{\mathrm{F}}^{2}.
\end{equation}

The matrix polynomial for model \eqref{TT-ar} can be defined as $\cm{A}(z):=\bm{I}_P-[\cm{A}_1]_dz-\cdots-[\cm{A}_p]_dz^p$, where $z\in\mathbb{C}$, and similarly the notations of $\mu_{\mathrm{min}}(\cm{A})$ and $\mu_{\mathrm{max}}(\cm{A})$.
Let $d_{\mathrm{M}}=\sum_{i=1}^d(r_{i-1}p_ir_i+r_{2d-i}p_ir_{2d-i+1})+r_d+pr_{2d}$, $\kappa_U=C_{e}/\mu_{\mathrm{min}}(\cm{A}^*)$ and $\kappa_L=c_{e}/\mu_{\mathrm{max}}(\cm{A}^*)$, where $\cm{A}^*(z)$ is the polynomial with $\cm{A}_{1:p}=\cm{A}_{1:p}^*$, and $\cm{A}_{1:p}^*$ is the true coefficient tensor.

\begin{assumption}\label{assump:radiusplus}
	The determinant of $\cm{A}(z)$ is not equal to zero for all $|z|<1$.
\end{assumption}

\begin{theorem}\label{thm:errorbound-TR-lag}
	Suppose that Assumptions \ref{assump:errorauto} and \ref{assump:radiusplus} hold, and the reshaped true coefficient tensor $\mathcal{M}(\cm{A}_{1:p}^*)$ has TT ranks $R=(r_1,r_2,\ldots,r_{2d})$. If the sample size $N\gtrsim\max\left(\kappa^2\kappa_U/\kappa_L,\kappa^4\kappa_U^2/\kappa_L^2\right) \cdot d\log(d\sqrt{\kappa_U/\kappa_L})d_{\mathrm{M}}$, then
	\begin{align*}
		\|\cm{\widehat{A}}_{1:p}-\cm{A}_{1:p}^*\|_{\mathrm{F}}&\lesssim \frac{\kappa}{\kappa_L}\sqrt{\frac{C_{e}\kappa_Ud\log(d)d_{\mathrm{M}}}{N}}
	\end{align*}
	with probability at least $1-\exp\{-c(\sum_{i=1}^dp_i+p)\}$, where $c$ is some positive constant.
\end{theorem}

Note that $\|\cm{\widehat{A}}_{1:p}-\cm{A}_{1:p}^*\|_{\mathrm{F}}^2 =\sum_{j=1}^p\|\cm{\widehat{A}}_j-\cm{A}_j^*\|_{\mathrm{F}}^2$, where $\cm{\widehat{A}}_j$'s and $\cm{A}_j^*$'s are OLS estimators and true coefficients, respectively.
From Theorem \ref{thm:errorbound-TR-lag}, when $d$ is fixed, and the parameters $c_e$, $C_e$, $\kappa^2$, $\mu_{\mathrm{min}}(\cm{A}^*)$ and $\mu_{\mathrm{max}}(\cm{A}^*)$ are bounded away from zero and infinity, it holds that $\sum_{j=1}^p\|\cm{\widehat{A}}_j-\cm{A}_j^*\|_{\mathrm{F}}=O_{P}(\sqrt{d_{\mathrm{M}}/N})$, where $d_{\mathrm{M}}$ is the corresponding model complexity.

\section{Algorithm and its theoretical justifications}\label{Implementation}
\subsection{Riemannian gradient descent algorithm}

The three estimation problems in the previous section can be summarized into
\begin{equation}\label{optimizationRGrad}
	\cm{\widehat{A}}=\arg\min_{\cm{A}\in\bm{\Theta}} \mathcal{L}_N(\cm{A}) \hspace{3mm}\text{with}\hspace{3mm}
	\mathcal{L}_N(\cm{A})=\frac{1}{N}\sum_{i=1}^{N}\|\cm{Y}_i-\langle \cm{A}, \cm{X}_i\rangle\|_{\mathrm{F}}^{2},
\end{equation}
where $(\bm{\Theta},\cm{A},\cm{X}_i)=(\bm{\Theta}_1(R),\cm{A},\cm{X}_i)$ for regression, $(\bm{\Theta}_2(R),\mathcal{M}(\cm{A}),\cm{Y}_{i-1}^\top)$ for autoregression with order one, and $(\bm{\Theta}_3(R),\mathcal{M}(\cm{A}_{1:p}),\cm{Y}_{i-1:i-p}^\top)$ for autoregression with a general order $p$.
Suppose that TT ranks $R=(r_1,\ldots,r_{m+n-1})$ are known. The optimization at \eqref{optimizationRGrad} is nonconvex, and it usually is challenging to solve it both numerically and theoretically.

There are three types of commonly used algorithms in the literature for tensor-related optimization: projected gradient descent methods \citep{Chen2019,Han2020,hao2021sparse}, alternating minimization \citep{Zhou2013,Sun2019} and Riemannian optimization methods \citep{luo2022tensor, cai2022provable,kressner2016preconditioned}.
Despite its popularity, the projected gradient descent method may suffer from high computational cost since the vanilla gradient matrix is usually full-rank, and it is time-consuming to compute its singular value decomposition (SVD) at each iteration.
In the meanwhile, the alternating minimization methods were demonstrated numerically to be more sensitive to rank misspecification, especially when being compared with Riemannian optimization approaches \citep{luo2022tensor}.

This paper introduces a Riemannian gradient descent algorithm for the optimization problem at \eqref{optimizationRGrad}; see Algorithm \ref{alg:RGrad} for details.
Note that the tensors with fixed TT ranks, such as $\bm{\Theta}_1(R)$, $\bm{\Theta}_2(R)$ or $\bm{\Theta}_3(R)$, will form a Riemannian manifold \citep{holtz2012manifolds}, and hence the constrained optimization problem at \eqref{optimizationRGrad} can be recast to an unconstrained one over the Riemannian manifold.
As a result, a three-stage procedure can be conducted at each iteration: projecting the vanilla gradient matrix onto a tangent space of the manifold, performing the gradient descent along the tangent space, and finally retracting the resulting values back to the manifold.
Compared to projected gradient descent methods, the Riemannian gradient on the tangent space is low-rank, and this can speed up the retraction stage significantly.
Moreover, the Riemannian gradient can be efficiently calculated by a closed-form solution, making the algorithm  appealing \citep{Zhou2020}. 
The Riemannian optimization method has been applied to Tucker tensor regression \citep{luo2022tensor} and tensor train completion \citep{cai2022provable}, and its efficiency has been demonstrated for low-rank tensor-related optimization.

\begin{algorithm}
	\caption{Riemannian gradient descent for tensor train regression}\label{alg:RGrad}
	\begin{algorithmic}[H]
		\State Input: $\{(\cm{Y}_i,\cm{X}_i), 1\leq i\leq N\}$, TT ranks $R=(r_{1},\ldots,r_{m+n-1})$, number of iterations $K$, initial values $\cm{A}^{0}\in\bm{\Theta}(R)$.
		\For{$k=0,1,\ldots,K-1$}
		\State $\cm{A}^{k+0.5}=\cm{A}^{k}-\alpha_kP_{\mathbb{T}_{\scalebox{0.5}{\cm{A}}^k}}\nabla \mathcal{L}(\cm{A}^{k})$
		\State $\cm{A}^{k+1}=\textrm{TT-SVD}(\cm{A}^{k+0.5})$ 
		\EndFor
		\State \Return $\cm{A}^{K}.$
	\end{algorithmic}
\end{algorithm}

Consider all tensors $\cm{Z}\in\mathbb{R}^{q_1\times\cdots \times q_n}$ with TT ranks $(r_1,\ldots,r_{n-1})$, and then a Riemannian manifold is formed.
For a point $\cm{A}=[[\bm{G}_1,\cm{G}_2,\cdots,\cm{G}_{n-1},\bm{G}_n]]$ of the Riemannian manifold, we assume that the first $n-1$ TT cores are orthonormal, i.e. $\bm{G}_1^\top\bm{G}_{1}=\bm{I}_{r_1}$ and $[\cm{G}_i]_2^\top[\cm{G}_i]_2=\bm{I}_{r_i}$ with $2\le i\le n-1$, and this can be obtained by the TT-SVD \citep{oseledetsTR}.
Denote by $\mathbb{T}_{\scalebox{0.5}{\cm{A}}}$ the tangent space of the manifold at point
$ \cm{A}$ and then, for any tensor $\cm{Z}\in\mathbb{R}^{q_1\times\cdots\times q_n}$, we can give its projection onto the tangent space $\mathbb{T}_{\scalebox{0.5}{\cm{A}}}$ below,
\begin{equation}\label{RGradient}
	P_{\mathbb{T}_{\scalebox{0.5}{\cm{A}}}}\cm{Z}=\delta\cm{Z}_1+\cdots+\delta\cm{Z}_{n} \quad \text{with} \quad \delta\cm{Z}_i=[[\bm{G}_1,\cdots,\cm{G}_{i-1},\cm{Z}_i,\cm{G}_{i+1},\cdots,\bm{G}_{n}]],
\end{equation}
where $\cm{Z}_i$ is a tensor of $r_{i-1}\times q_i\times r_i$ for each $1\leq i\leq n$, and $r_0=r_n=1$; see \cite{holtz2012manifolds}.
Moreover, from the derivation of Section 4.3 in \cite{cai2022provable}, it holds that
\begin{align*}
	[\cm{Z}_i]_2 = (\bm{I}_{q_ir_{i-1}}-[\cm{G}_i]_2[\cm{G}_i]_2^\top)(\bm{I}_{q_i}\otimes\bm{G}^{\leq i-1})^\top[\cm{Z}]_i(\bm{G}^{\geq i+1})^\top(\bm{G}^{\geq i+1}(\bm{G}^{\geq i+1})^\top)^{-1}
\end{align*}
for $1\le i\le n-1$, and $\cm{Z}_n = [\cm{Z}]_{n-1}^\top\bm{G}^{\leq n-1}$, where $\bm{G}^{\leq i}$'s and $\bm{G}^{\geq i}$'s are defined in Section \ref{sec:2-1}.

We now introduce the three-stage procedure at each iteration.
At the first stage, the Riemannian gradient is obtained by projecting the vanilla gradient $\nabla \mathcal{L}(\cm{A}^{k})$ onto the tangent space $\mathbb{T}_{\scalebox{0.5}{\cm{A}}^k}$ at the current estimate $\cm{A}^{k}$. This projection is performed using the operator $P_{\mathbb{T}_{\scalebox{0.5}{\cm{A}}^k}}$, which projects any tensor onto the tangent space of $\cm{A}^k$. 
The second stage conducts gradient descent along the tangent space, and the step size $\alpha_k$ is chosen dynamically as
\begin{equation}\label{alg:step}
	\begin{split}
		\alpha_k := & \arg\min_{\alpha\in\mathbb{R}}\frac{1}{2}\sum_{i=1}^N\|\cm{Y}_j-\langle\cm{A}^k-\alpha P_{\mathbb{T}_{\scalebox{0.5}{\cm{A}}^k}}\nabla \mathcal{L}(\cm{A}^{k}),\cm{X}_i\rangle\|_{\mathrm{F}}^{2}\\
		=& \frac{\|P_{\mathbb{T}_{\scalebox{0.5}{\cm{A}}^k}}\nabla \mathcal{L}(\cm{A}^{k})\|_{\mathrm{F}}^{2}}{N^{-1}\sum_{i=1}^N\|\langle P_{\mathbb{T}_{\scalebox{0.5}{\cm{A}}^k}}\nabla \mathcal{L}(\cm{A}^{k}),\cm{X}_i\rangle\|_{\mathrm{F}}^{2}}.
	\end{split}
\end{equation} 
The last stage is to retract $\cm{A}^{k+0.5}=\cm{A}^{k}-\alpha_kP_{\mathbb{T}_{\scalebox{0.5}{\cm{A}}^k}}\nabla \mathcal{L}(\cm{A}^{k})$ back to the manifold by using the singular value decomposition for tensor train (TT-SVD) in \cite{oseledetsTR}. 
{ Finally, the proposed algorithm is a gradient descent method in nature, and it can be guaranteed to converge to at least a local stationary point \citep{chen2019nonconvex}.}

\subsection{Convergence analysis}

This subsection conducts convergence analysis for the Riemannian gradient descent algorithm, and its proving technique is similar to the Riemannian optimization methods for Tucker tensor regression \citep{luo2022tensor} and tensor train completion \citep{cai2022provable}.

\begin{assumption}\label{ass:iid entry}
	Each entry of predictors ${\cm{X}_i}$ is independently drawn from a $\sigma^2$-sub-Gaussian distribution with mean zero and variance one.
\end{assumption}

Note that the true coefficient tensor $\cm{A}^\ast$ has TT ranks $R=(r_1,\ldots,r_{m+n-1})$. Let $\underline{\lambda}:=\min_{1\leq k\leq m+n-1}\{\sigma_{r_k}([\cm{A}^\ast]_k)\}$, where $\sigma_{r}(\bm{A})$ denotes the $r$-th largest singular value of a matrix $\bm{A}$, and it refers to the minimum nonzero singular values of all sequential matricizations. Intuitively, it should not be too small to battle with noise. Moreover, three constants $\mu_{\ell}\in(0,1)$ with TT ranks $\ell=2R$, $4R$ and $5R$ come from the restricted isometry property in Proposition 2 of the supplementary file, and they are closely related to the shape of objective functions. 

\begin{theorem}\label{thm:alg}
	Suppose that Assumption \ref{assump:error} and \ref{ass:iid entry} hold, $\underline{\lambda}\ge c_2\kappa(m+n)^{5/2}\{\log(m+n)\}^{1/2}$ $(1+\mu_{5R}-\mu_{2R})({d_{\mathrm{LR}}}/{N})^{1/2}/(1-\mu_{2R})$, the initial value $\cm{A}^0\in\bm{\Theta}$ is chosen such that $\norm{\cm{A}^0-\cm{A}^\ast}_\mathrm{F}\leq c_1\underline{\lambda}(m+n)^{-3/2}/(1+\mu_{5R}-\mu_{2R})$, $\max(\mu_{2R},\mu_{4R})\leq{c_3}{(m+n)^{-1/2}}$, and the sample size $N\gtrsim\mathrm{max}(\sigma^2,\sigma^4)(m+n)\{(p_1r_{n+1}+\sum_{i=2}^mp_ir_{i+n-1}r_{i+n})\log m+\sum_{i=1}^n\log q_i\}$. After the $K$-th iteration of Algorithm \ref{alg:RGrad}, we have 
	\begin{equation}\label{eq:thmalg}
		\norm{\cm{A}^K-\cm{A}^\ast}_\mathrm{F}\leq 0.5^{K}\norm{\cm{A}^0-\cm{A}^\ast}_\mathrm{F}+\frac{c_4\kappa(m+n)\{\log(m+n)\}^{1/2}}{1-\mu_{2R}}\left(\frac{d_\mathrm{LR}}{N}\right)^{1/2},
	\end{equation}
	with probability at least $1-\exp\{-c_5(\sum_{i=1}^mp_i+\sum_{i=1}^n\log q_i)\}$, where $c_1$, $c_2$, $c_3$, $c_4$ and $c_5$ are some positive constants.
\end{theorem}

The two terms at the right-hand side of \eqref{eq:thmalg} correspond to the optimization and statistical errors, respectively. The statistical error has a form similar to that in Theorem \ref{thm:reg}. For any small $\epsilon>0$, we can choose the number of iterations $K=\{\log \|\cm{A}^{0}-\cm{A}^*\|_{\mathrm{F}}-\log\epsilon\}/\log2$ such that the optimization error has a value smaller than $\epsilon$.

\begin{remark} (Independent entries assumption).
Each entry of $\cm{X}$ is assumed to be independent in Theorem \ref{thm:alg}, and this is mainly for accommodating the dynamically chosen step size $\alpha_k$ since it requires the stronger condition of restricted isometry property instead of the restricted strong convexity. 
This condition is also needed to establish convergence analysis for Riemannian gradient descent algorithms in \cite{luo2022tensor} for Tucker tensor regression and in \cite{cai2022provable} for tensor train completion. 
For predictors with correlated entries, we derive the corresponding linear convergence theoretically in Theorem D.1 of the supplementary file, while the step size $\alpha_k$ at \eqref{alg:step} cannot be used anymore.
The theoretical justification for tensor autoregression is more challenging although the algorithm works well in practice, and we leave it for future research.
The step size $\alpha_k$ is multiplied by 0.1 to stabilize the optimization for tensor autoregression in all simulation studies and real analysis in this paper.
\end{remark}

\begin{remark} (Alternative method for retraction).
From the technical proof of Theorem \ref{thm:alg}, the approximation error caused by TT-SVD at the retraction stage has a non-negligible effect on the convergence.
It hence is of interest to alternatively consider the one-step tensor train orthogonal iteration (TTOI) \citep{Zhou2020} since it usually leads to a smaller approximation error, and accordingly different assumptions will be required. We leave it for future research.
\end{remark}

\begin{remark} (Initialization of the algorithm).
For scalar-to-tensor and tensor-to-vector regression models, by following \cite{luo2022tensor}, we consider the spectral method and a version based on QR decomposition for the initialization, respectively, and their theoretical justifications are also given in the supplementary file.
In the meanwhile, for a general tensor regression, it is usually difficult to provide a theoretically justified initialization, and this paper simply employs the spectral method, i.e. we conduct the one-step TTOI to  $N^{-1}\sum_{i=1}^N\cm{Y}_i\circ\cm{X}_i$ and use the resultant as an initialization.  
Finally, for tensor autoregression, the task is much more challenging, and the spectral method also does not work well numerically.
We suggest to run a few iterations of an approximate projected gradient descent algorithm with $\cm{A}^0$ being zero tensor.
More details on initialization can be found in Section C of the supplementary file.
\end{remark}

\subsection{Rank selection}\label{rankselection}

TT ranks are assumed to be known in the proposed methodology, while they are unknown in most real applications.
This subsection introduces an information criterion to select them.

Denote by $\cm{\widehat{A}}(R)$ the estimator from \eqref{optimizationRGrad}, where TT ranks $R=(r_1,\ldots,r_{m+n-1})$, and then the Bayesian information criterion (BIC) can be constructed below,
\begin{equation}\label{eq:IC}
	\mathrm{BIC}(R)=N\log\left\{N^{-1}\sum_{i=1}^{N}\|\cm{Y}_{i}-\langle \cm{\widehat{A}}(R), \cm{X}_{i}\rangle\|_{\mathrm{F}}^{2}\right\}+\phi d(R)\log(N) ,
\end{equation}
where $\phi$ is a {tuning} parameter, and $d(R)=\sum_{i=1}^nq_ir_ir_{i-1}+\sum_{j=1}^mp_jr_{n+j}r_{n+j-1}+r_n$ with $r_0=r_{m+n}=1$ is the number of free parameters.
As a result, TT ranks can be selected by $\widehat{R}_{\mathrm{Joint}}=\argmin_{r_i\leq \bar{r},1\leq i\leq m+n-1}\mathrm{BIC}(R)$, where $\bar{r}$ is a predetermined upper bound of ranks, while the computational cost will be very high for larger values of $m$ and $n$.
We may search for the best rank separately for each mode. Specifically, define
\[
\widehat{r}_{j}=\argmin_{r_{j}\le \bar{r}}\mathrm{BIC}(\bar{r},\ldots,\bar{r},r_{j},\bar{r},\ldots,\bar{r})
\]
for $1\leq j\leq m+n-1$, and the selected TT ranks are then $\widehat{R}_{\mathrm{Separate}}=(\widehat{r}_1,\ldots,\widehat{r}_{m+n-1})$.

We next provide theoretical justifications for the above selecting criteria. 
Assume that the true TT ranks are $R=(r_1,\ldots,r_{m+n-1})$, and let $\underline{\lambda} =\min_{1\le k\le m+n-1}\{\sigma_{r_k}([\cm{A}^\ast]_k)\}$ be the minimum of nonzero singular values of all sequential matricizations, where $\cm{A}^\ast$ represents the true coefficient tensor $\cm{A}^\ast$ for regression, $\mathcal{M}(\cm{A}^\ast)$ for autoregression with order one, or $\mathcal{M}(\cm{A}_{1:p}^\ast)$ for autoregression with a general order $p$, respectively.
Moreover, denote $\bar{d}=\bar{r}^2(\sum_{i=1}^nq_i+\sum_{i=1}^{m}p_i)+\bar{r}$. The consistency for rank selection is then given below. 


\begin{theorem}\label{thm:rank}
	Suppose that $\underline{\lambda}^2\geq
	\kappa^2C_{x}(m+n)\log(m+n)\bar{d}/(c_x^2N)+c_1$,  or $\kappa^2\kappa_UC_{e}d\log(d)\bar{d}/(\kappa_L^2N)+c_1$ for tensor train regression or autoregression, respectively, where $c_1$ is some positive constant given in the proof, and the other notations are defined for each of the three models in Section 3.
	If the conditions of Theorem \ref{thm:reg}, \ref{thm:errorbound-TR} or \ref{thm:errorbound-TR-lag} hold, and $r_j\leq \bar{r}$ for all $1\leq j\leq m+n-1$, then $\mathbb{P}\{\widehat{R}=R\}\rightarrow 1$ for both $\widehat{R}=\widehat{R}_{\mathrm{Joint}}$ and $\widehat{R}_{\mathrm{Separate}}$ as $N\rightarrow\infty$.
\end{theorem}

\begin{remark}
The above theorem provides an asymptotic result, while all the other theorems in this paper give nonasymptotic results. 
Consider tensor autoregression with order one and, theoretically speaking, we can select the right ranks with probability approaching one as $N\rightarrow\infty$, as long as $ d\log(d)\overline{d}/(\Pi_{i=1}^{d}p_i\log N)\ll \phi\ll N\log\{1+(\Pi_{i=1}^{d}p_i)^{-1}\}/\log N$. For low-dimensional setups, the range for feasible $\phi$ becomes larger when the sample size $N$ increases, and it contains $1/2$ for a sufficiently large $N$, i.e. it will reduce to the classical BIC. However, for high-dimensional setups, the feasible values of $\phi$ will depend on the whole parameter space, and we set it to 0.02 in all simulations and empirical examples with good performance.
\end{remark}

\section{Simulation studies}\label{sim}

This section conducts two simulation experiments to evaluate the finite-sample performance of OLS estimators for TT regression and to compare CP, Tucker and TT decomposition for tensor regression, respectively.


The data generating process in the first experiment is TT regression model at \eqref{regmodel} with $(n,m)=(2,3)$, i.e. $\cm{Y}_i\in\mathbb{R}^{q_1\times q_2}$ and $\cm{X}_i\in\mathbb{R}^{p_1\times p_2\times p_3}$.
For the components of predictors $\cm{X}_i$ and errors $\cm{E}_i$, they are (i.) independent with uniform distribution on $(-0.5,0.5)$, (ii.) independent with a standard normal distribution, or (iii.) correlated with normality, i.e. $\vectorize(\cm{X}_i){\sim}N(\bm{0},\bm{\Sigma}_x)$ and $\vectorize(\cm{E}_i) {\sim}N(\bm{0},\bm{\Sigma}_e)$ with $\bm{\Sigma}_x=(0.5^{|i-j|})_{1\leq i,j\leq P}$ and  $\bm{\Sigma}_e=(0.5^{|i-j|})_{1\leq i,j\leq Q}$, where $P=p_1p_2p_3$ and $Q=q_1q_2$.
The coefficient tensor has a TT decomposition $\cm{A}=[[\bm{G}_1,\cm{G}_2,\bm{\Sigma},\cm{G}_{3},\cm{G}_{4},\bm{G}_5]]\in\mathbb{R}^{q_1\times q_2\times p_1\times p_2\times p_3}$, and TT ranks are set to $R=(r,r,r,r)$ for simplicity. Orthonormal matrices
$\bm{G}_1,[\cm{G}_2]_2,[\cm{G}_3]_1,[\cm{G}_4]_1$ and $\bm{G}_5$ are generated by extracting leading singular vectors of Gaussian random matrices, and
$\bm{\Sigma}$ is a Gaussian random diagonal matrix and rescaled such that $\norm{\bm{\Sigma}}_{\mathrm{F}}=5$.

From Theorem \ref{thm:reg}, the estimation error has the convergence rate of $\sqrt{d_{\mathrm{LR}}/N}$ with 
$d_{\mathrm{LR}}=r^2(p_1+p_2+q_2)+r(p_3+q_1) +r$. To numerically verify the convergence rate, we consider four settings:
(a) $(p_i,q_j,r)$ is fixed at $(5,4,2)$ for all $i\in\{1,2,3\}$ and $j\in\{1,2\}$, while the sample size $N$ varies among the set of $\{200, 300, 400, 700, 1000, 1500\}$;
(b) $(p_i,q_j,N)$ is fixed at $(6,6,1000)$ for all $i\in\{1,2,3\}$ and $j\in\{1,2\}$, while the rank $r$ varies from one to six;
(c) $(r,N)$ is fixed at $(1,600)$, while we use the same value for all $p_i$'s and $q_j$'s, and it varies among $\{4, 5, 6, 7, 8, 10\}$; and
(d) $(r,N,r^2[p_1+p_2+q_2]+r[p_3+q_1])$ is fixed at $(1,600,26)$, while we vary $(p_1,p_2,p_3,q_1,q_2)$ among six different values, given in Section E.1 of the supplementary file.
Algorithm \ref{alg:RGrad} is employed to search for OLS estimators $\cm{\widehat{A}}_{\mathrm{LR}}$ with the step size dynamically chosen as in \eqref{alg:step}. We use spectral initialization mentioned in Section 4.2.
Estimation errors $\|\cm{\widehat{A}}_{\mathrm{LR}}-\cm{A}\|_{\mathrm{F}}$, averaged over 300 replications, are calculated and given in Figure \ref{sim:regfig}.

The linearity in Figure \ref{sim:regfig}(a)-(c) implies that $\|\cm{\widehat{A}}_{\mathrm{LR}}-\cm{A}\|_{\mathrm{F}}$ is proportional to $1/\sqrt{N}$ and $\sqrt{d_\mathrm{LR}}$. 
Moreover, the constancy in Figure \ref{sim:regfig}(d) further verifies that the estimation error depends on $p_i$'s and $q_j$'s linearly, which is one of the main advantages of tensor train regression.
The theoretical findings in Theorem \ref{thm:reg} are hence confirmed.
Moreover, for the correlated data from (iii.), all estimators have a slightly worse performance, and a generalized least squares method may be needed to improve their efficiency.
The finite-sample performance of OLS estimators for TT autoregression in handling time series data was also evaluated by simulation experiments, and similar findings can be observed; see Section E.3 in the supplementary file for details.

In the second experiment, the data are generated from model \eqref{regmodel} with $n=1$ and $2\leq m\leq 5$, i.e. $\bm{y}_i\in\mathbb{R}^{q_1}$ and $\cm{X}_i\in\mathbb{R}^{p_1\times \cdots\times p_m}$, and the components of predictors $\cm{X}_i$ and errors $\cm{E}_i$ are independently sampled from a standard normal distribution.
We set $q_1=p_1=p_2=8$ and $p_3=\cdots=p_m=5$, and the sample size is fixed at $N=500$.
To set the coefficient tensor $\cm{A}\in\mathbb{R}^{q_1\times p_1\times \cdots\times p_m}$, we consider three different decomposition methods: CP decomposition $(\cm{A}_\mathrm{CP})$ in \eqref{CP-format} with the CP rank  $r^\mathrm{CP}$, Tucker decomposition $(\cm{A}_\mathrm{Tu})$ in \eqref{Tucker-format} with Tucker ranks $r^\mathrm{Tu}_i$'s for $1 \leq i \leq m+1$, and TT decomposition $(\cm{A}_\mathrm{TT})$ with TT ranks  $r^\mathrm{TT}_i$'s for $0\leq i \leq m+1$ and $r^\mathrm{TT}_0=r^\mathrm{TT}_{m+1}=1$.
Note that a tensor with one of the three decomposition methods can be rewritten into that with any of the other two, and the relationships among these ranks are given below, 
\begin{align}
	\begin{split}\label{simu}
	r_i^{\mathrm{TT}}\leq r^\mathrm{CP}, \quad 
	r_i^\mathrm{Tu}\leq r^\mathrm{CP}, \quad 
	r^\mathrm{CP}\leq r_1^{\mathrm{TT}} \cdots  r_m^{\mathrm{TT}}, \quad
	r_i^{\mathrm{Tu}}\leq r_{i-1}^{\mathrm{TT}}r_i^{\mathrm{TT}}, \\
	r^\mathrm{CP}\leq r_1^{\mathrm{Tu}} \cdots  r_m^{\mathrm{Tu}}, \quad\text{and}\quad
	r_i^{\mathrm{TT}}\leq\mathrm{min}(r_1^{\mathrm{Tu}}\cdots r_i^{\mathrm{Tu}}, r_{i+1}^{\mathrm{Tu}}\cdots r_{m+1}^{\mathrm{Tu}});
	\end{split}
\end{align}
see \cite{cai2022provable} for details.
We set $r^\mathrm{CP}=r_1^{\mathrm{TT}}=\cdots=r_m^{\mathrm{TT}}=r_1^{\mathrm{Tu}}=\cdots=r_{m+1}^{\mathrm{Tu}}=3$ or 4, and $\norm{\cm{A}_\mathrm{CP}}_{\mathrm{F}}=\norm{\cm{A}_\mathrm{TT}}_{\mathrm{F}}=\norm{\cm{A}_\mathrm{Tu}}_{\mathrm{F}}=5$.
The generation of $\cm{A}_\mathrm{TT}$ and the orthonormal matrices, $\bm{U}_i$ with $1\leq i \leq m+1$,  in $\cm{A}_\mathrm{Tu}$ follow the procedures described in the first experiment; while the core tensor $\cm{G}$ in $\cm{A}_\mathrm{Tu}$ has independent entries from the standard normal distribution.

For the data generating process with $\cm{A}_\mathrm{CP}$, we conduct OLS estimation for tensor regression with three decomposition methods, and the corresponding estimators are denoted by $\cm{\widehat{A}}_{\mathrm{CP}}^\mathrm{CP}$, $\cm{\widehat{A}}_{\mathrm{CP}}^\mathrm{Tu}$ and $\cm{\widehat{A}}_{\mathrm{CP}}^\mathrm{TT}$, respectively, where the superscript indicates the type of estimators.
We utilize the projected gradient descent with the CP-ALS algorithm \citep{Kolda2009} and a step size of $\eta=0.01$ to search for the estimator with CP decomposition, $\cm{\widehat{A}}_{\mathrm{CP}}^\mathrm{CP}$.
Moreover, Algorithm \ref{alg:RGrad} and the Riemannian gradient algorithm of Tucker decomposition \citep{luo2022tensor} are adopted to search for $\cm{\widehat{A}}_{\mathrm{CP}}^\mathrm{TT}$ and $\cm{\widehat{A}}_{\mathrm{CP}}^\mathrm{Tu}$, respectively, and the ranks are set according to the relationships at \eqref{simu}. 
Similarly, we can conduct OLS estimation for data generating processes with $\cm{A}_\mathrm{Tu}$ and $\cm{A}_\mathrm{TT}$, respectively, and the notations can also be defined accordingly.
Figure \ref{sim:TTvsTucker} presents the averaged estimation errors over 300 replications, where $\|\cm{\widehat{A}}_\mathrm{CP}^{(\cdot)}-\cm{A}_\mathrm{CP}\|_{\mathrm{F}}$, $\|\cm{\widehat{A}}_\mathrm{TT}^{(\cdot)}-\cm{A}_\mathrm{TT}\|_{\mathrm{F}}$,
and $\|\cm{\widehat{A}}_\mathrm{Tu}^{(\cdot)}-\cm{A}_\mathrm{Tu}\|_{\mathrm{F}}$ are depicted in the left, middle and right panels, respectively. 

From the left panel of Figure \ref{sim:TTvsTucker}, since the true coefficient tensor has CP decomposition, $\cm{\widehat{A}}_\mathrm{CP}^\mathrm{CP}$ performs best as expected. The TT's estimation error $\norm{\cm{\widehat{A}}_\mathrm{CP}^\mathrm{TT}-\cm{A}_\mathrm{CP}}_\mathrm{F}$ exhibits a linear trend while the Tucker's estimation error $\norm{\cm{\widehat{A}}_\mathrm{CP}^\mathrm{Tu}-\cm{A}_\mathrm{CP}}_\mathrm{F}$ shows an exponential increase as $m$ increases. The difference becomes more pronounced when the true CP rank slightly changes from three to four.
Furthermore, comparing the middle and right panels, we observe that as $m$ and $r$ increases, when the true coefficient tensor has TT decomposition, Tucker estimator $\cm{\widehat{A}}_\mathrm{TT}^\mathrm{Tu}$ underperforms $\cm{\widehat{A}}_\mathrm{TT}^\mathrm{TT}$ more than that of TT estimator $\cm{\widehat{A}}_\mathrm{Tu}^\mathrm{TT}$ underperforms $\cm{\widehat{A}}_\mathrm{Tu}^\mathrm{Tu}$ when the true coefficient tensor has Tucker decomposition. 
We may conclude that Tucker decomposition is only suitable for low-order tensors, whereas TT decomposition is preferable for high order tensors with large $m$ and $r$.
Additionally, it is noteworthy to point out that the projected gradient descent algorithm for $\cm{\widehat{A}}_\mathrm{TT}^\mathrm{CP}$ and $\cm{\widehat{A}}_\mathrm{Tu}^\mathrm{CP}$ encounters non-convergence traps under the model misspecification due to numerical instability, and hence its results are absent from the middle and right panels. 

\section{An empirical example}\label{realdata}

Electrocorticography (ECoG) has been widely used to record brain activities for brain-machine interface (BMI) technology, and it has two categories: subdural and epidural ECoGs. The epidural ECoG is a more practical interface for real applications since it has long-term durability, and this subsection analyzes the epidural ECoG data in \cite{Shimoda2012}.

In detail, two adult Japanese monkeys were first implanted with 64-channel ECoG electrodes in the epidural space of the left hemisphere in Figure \ref{brainfig3}(a), and they were then trained to retrieve foods using the right hand.
Six markers were placed at the left and right shoulder, elbow and wrist, respectively, and their 3D positions were recorded at a certain moment $i$, where the three values of a position represent the left-right, forward-backward and up-down movements, respectively.
As a result, 3D positions are nested under each marker, and the hand position can be described by a matrix $\bm{Y}_i\in\mathbb{R}^{3\times6}$.
On the other hand, the ECoG signals from 64 channels were recorded with a sampling rate of 1 kHz per channel, and they were then preprocessed by the Morlet wavelet transformation at five center frequencies, 160, 80, 40, 30 and 20 Hz, where the cutoff frequencies are at 0.1 and 400 Hz; see \cite{zhao2012higher}.
Moreover, there are ten timestamps per second, and the ECoG signals in the previous second are believed to be able to affect hand positions.
Hence, a tensor is formed for predictors $\cm{X}_i\in\mathbb{R}^{64\times10\times5}$, where the three modes represent channel, timestamp and center frequency, respectively.
It has a factorial structure, and the arrangement will be explained later.
We use the records of 1045 seconds on August 2, 2010, and the hand position was extracted every 0.5 seconds. This results in $N=2089$ observations for a regression problem $\{(\bm{Y}_i,\cm{X}_i), 1\leq i\leq N\}$.

Tensor regression model at \eqref{regmodel} is applied with the coefficient tensor $\cm{A}\in\mathbb{R}^{3\times 6\times 64\times10\times5}$, and it has $57600$ parameters in total.
To conduct a feasible estimation, we have to restrict $\cm{A}$ onto a low-dimensional space, and four dimension reduction techniques are considered. They include TT, CP and Tucker decomposition, and the fourth one is an early stopping (ES) technique in the area of machine learning \citep{Goodfellow2016}, where we use the gradient descent method to update $\cm{A}^{(k)}$, and the total number of iterations are restricted to be the same as that of tensor train regression.
We first fit the TT regression to the whole dataset, and the selected TT ranks are $R_{\mathrm{Joint}}=(2, 1, 4, 1)$ with $\bar{r}=5$.
For CP decomposition at \eqref{CP-format}, the OLS estimation is considered as in \cite{lock2018tensor}, while the ridge regularization is removed. Moreover, the BIC at \eqref{eq:IC} is adapted for rank selection with $d(r)=r(\sum_{i=1}^3p_i+\sum_{i=1}^2q_i)$, $\phi=0.3$ and the upper bound of ranks $\bar{r}=15$; and the selected rank is $\widehat{r}=4$.
For Tucker decomposition at \eqref{Tucker-format}, the OLS estimation is employed again, and the selected Tucker ranks are $\widehat{R}=(2,2,1,2,2)$ by the 95\% cumulative percentage of total variation \citep{Han2020}.

We next evaluate the performance of these four dimension reduction techniques. We consider two Tucker regression methods including Riemannian Gradient descent (RGradTucker) in \cite{luo2022tensor} and projected gradient descent (PGDTucker) in \cite{Han2020}. For the sake of comparison, observations are split into two sets: 1500 samples for training and the remaining 589 for testing. We randomly shuffle the data 300 times and the boxplots of their $\ell_1$, $\ell_2$ and $\ell_{\infty}$ norms of forecast errors are displayed in Figure \ref{ECOG:fig}. Specifically, the $\ell_2$ norm of forecast errors is defined as $589^{-1}\sum_{i=1}^{589}\|\vectorize(\bm{Y}_i)-\vectorize(\bm{\widehat{Y}}_i) \|_2$, where $\bm{\widehat{Y}}_i =\langle\cm{\widehat{A}}, \cm{X}_i\rangle$ is the predicted value, and we can similarly define $\ell_1$ and $\ell_{\infty}$ norms of forecast errors.
From Figure \ref{ECOG:fig}, TT decomposition outperforms Tucker decomposition significantly, while it is also better than CP decomposition.
It may be due to the facts that, for high order tensors, Tucker decomposition fails to compress the space dramatically, while TT decomposition can reduce the parameter space as significantly as CP decomposition. In addition, CP decomposition exhibits a larger deviation due to its numerical instability.
The comparison between RGradTucker and PGDTucker also shows that the Riemannian gradient descent algorithm can restrict $\cm{A}$ onto a low-dimensional space more efficiently in contrast to the projected gradient descent.

Finally, we refit TT regression by using all observations with TT ranks $R=(2, 1, 4, 1)$ selected in the above, and Algorithm S.1 in the supplementary file is then employed to calculate TT decomposition, $\cm{\widehat{A}}= [[\bm{G}_1, \cm{G}_2, \bm{\Sigma}, \cm{G}_3, \cm{G}_4, \bm{G}_5]]$.
As a result, there are only one response factor $[\cm{G}_2]_{2}^\top\vectorize(\bm{G}_1^{\top}\bm{Y}_i)\in\mathbb{R}$ and one predictor factor $[\cm{G}_{3}]_1 ([\cm{G}_{4}]_1\otimes\bm{I}_{64}) (\bm{G}_{5}^\top\otimes\bm{I}_{640})
\vectorize(\cm{X}_i)\in\mathbb{R}$ in total; see Section 2.2 and \eqref{interpretation} for details.
For predictors, the orthonormal matrix $\bm{G}_5$ is first used to extract one factor from five center frequencies, which is the high-frequency band (or high-gamma band).
We then use $\cm{G}_4$ to summarize all 10 factors across timestamps into four ones, one of which is obviously for the high-gamma band, and finally the predictor factor can be obtained by applying $\cm{G}_3$; see loading matrices in Section F.1 of the supplementary file.
To further understand how the predictor factor summarizes center frequencies and timestamps, we average the absolute loading tensor along channels, and the corresponding heatmap is presented in Figure \ref{brainfig3}(b).
It can be seen that the loadings concentrate on the highest center frequency, while there is no clear trend for timestamps.
This is consistent with the well-known fact for ECoG that the motor behavior is highly related to the high-gamma band \citep{Schalk2007,Scherer2009}.
We then focus on the high-gamma band factor from $\bm{G}_5$ and $\cm{G}_4$, and the absolute loadings of its 64 channels from $\cm{G}_3$ are depicted in Figure \ref{brainfig3}(c) with heatmaps, where the black bounding box corresponds to the primary motor cortex.
Due to the importance of the primary motor cortex in decoding ECoG \citep{Shimoda2012}, the heavy loadings on electrode channels in the black bounding box further confirm the usefulness of the proposed TT regression.
For the arrangement of three modes in predictors, we place the channel at the highest level since it is the most important to detect influential electrode channels. In the meanwhile, from the above analysis, the high-frequency band plays a key role, and hence we place center frequencies at the lowest level such that they can be first summarized.

For the response factor, Figure \ref{brainfig3}(d) displays its heatmap of the absolute loading matrix, and six markers have roughly equal contributions to motor behavior. The forward-backward movement dominates the interpretation of hand positions, while the left-right or up-down movements have a different receptive field.
It is noteworthy to point out that the fitted tensor regression via Tucker decomposition has four response and four predictor factors since the Tucker ranks are $R=(2,2,1,2,2)$, while that via CP decomposition has predictor factors only.

\section{Conclusions and discussions}\label{sec:7}

This paper proposes a new tensor regression, tensor train (TT) regression, by revising the classical TT decomposition and then applying it to the coefficient tensor, and the new model is further extended to TT autoregression in handling time series data.
Comparing with the models via Tucker decomposition, the two new models can be applied to the case with higher order responses and predictors, while they have the same stable numerical performance.
Moreover, the proposed new models are shown to well match the data with hierarchical structures, and they also have better interpretation even for factorial data, which are supposed to be better fitted by Tucker decomposition.

The proposed methodology in this paper can be extended along three directions.
First, for TT regression, the sample size is required to be $N\gtrsim\sum_{i=1}^mp_i+\sum_{i=1}^nq_i$ such that the estimation consistency can be obtained, while the number of variables for some modes may be larger than the sample size $N$ in some real applications.
To handle this case, we may further impose sparsity to the components of TT decomposition \citep{Basu2015,Wang2021}.
Specifically, for the coefficient tensor $\cm{A}=[[\bm{G}_1,\cm{G}_2,\ldots,\cm{G}_n,\bm{\Sigma},\cm{G}_{n+1},\ldots,\cm{G}_{m+n-1},\bm{G}_{m+n}]]$ with TT ranks $(r_1,\ldots,r_{m+n-1})$, let $s_1$ and $s_{m+n}$ be the number of non-zero rows of $\bm{G}_1\in\mathbb{R}^{q_1\times r_1}$ and $\bm{G}_{m+n}\in\mathbb{R}^{q_{m+n}\times r_{m+n-1}}$, respectively, and $s_k$ be the number of non-zero lateral slices of $\cm{G}_k\in\mathbb{R}^{r_{k-1}\times q_k\times r_k}$ with $2\leq k\leq m+n-1$, where $q_{n+j}=p_j$ for $1\leq j\leq m$.
Note that $1\leq s_k\leq q_k$ for $1\leq k\leq m+n$, and we can conduct estimation by a hard thresholding method. 
Denote by $\cm{\widehat{A}}_{\mathrm{SP}}$ the corresponding estimator, and it can be verified that
$\|\cm{\widehat{A}}_{\mathrm{SP}}-\cm{A}^*\|_{\mathrm{F}}\lesssim \sqrt{s_{\max}(r_{\max}^2+\sum_{i=1}^m\log p_i+\sum_{i=1}^{n}\log q_i)/N}$ with high probability, where $s_{\max}=\mathrm{max}\{s_1, s_2, \cdots, s_{m+n}\}$, and the required sample size is significantly reduced.
The algorithms in Section \ref{Implementation} can also be adopted to search for estimators.
Secondly, the currently used factor modeling methods for tensor-valued time series are all based on Tucker decomposition \citep{Chen2021,wang2019factor} and, as mentioned in Section \ref{adv:TT}, they cannot be used for the data with hierarchical structures. It is of interest to use TT decomposition for factor modeling. 
{Finally, we may extend the proposed methodology to a generalized linear model with a scalar response $y$ and a tensor-valued predictor $\cm{X} \in \mathbb{R}^{p_1 \times \dots \times p_m}$ \citep{Zhou2013,chen2019nonconvex,tang2020individualized}. Specifically, suppose that the response is from the exponential family, and $g(\mathbb{E}(y|\cm{X})) = \langle \cm{A}, \cm{X} \rangle$ with $g(\cdot)$ being a suitable link function. 
We then apply TT decomposition to the coefficient tensor $\cm{A} \in \mathbb{R}^{p_1 \times \dots \times p_m}$, and the TT loading matrices will extract low-dimensional factors from predictors sequentially as in Section 3.}



\section*{Acknowledgement}
We are deeply grateful to the editor, the associate editor and three anonymous referees for their valuable comments that led to the substantial improvement of the manuscript. 
	
	\bibliography{TensorTrain}
	

\newpage

\begin{figure}[hb]
	\centering
	\includegraphics[width=0.8\textwidth,height=0.4\textwidth]{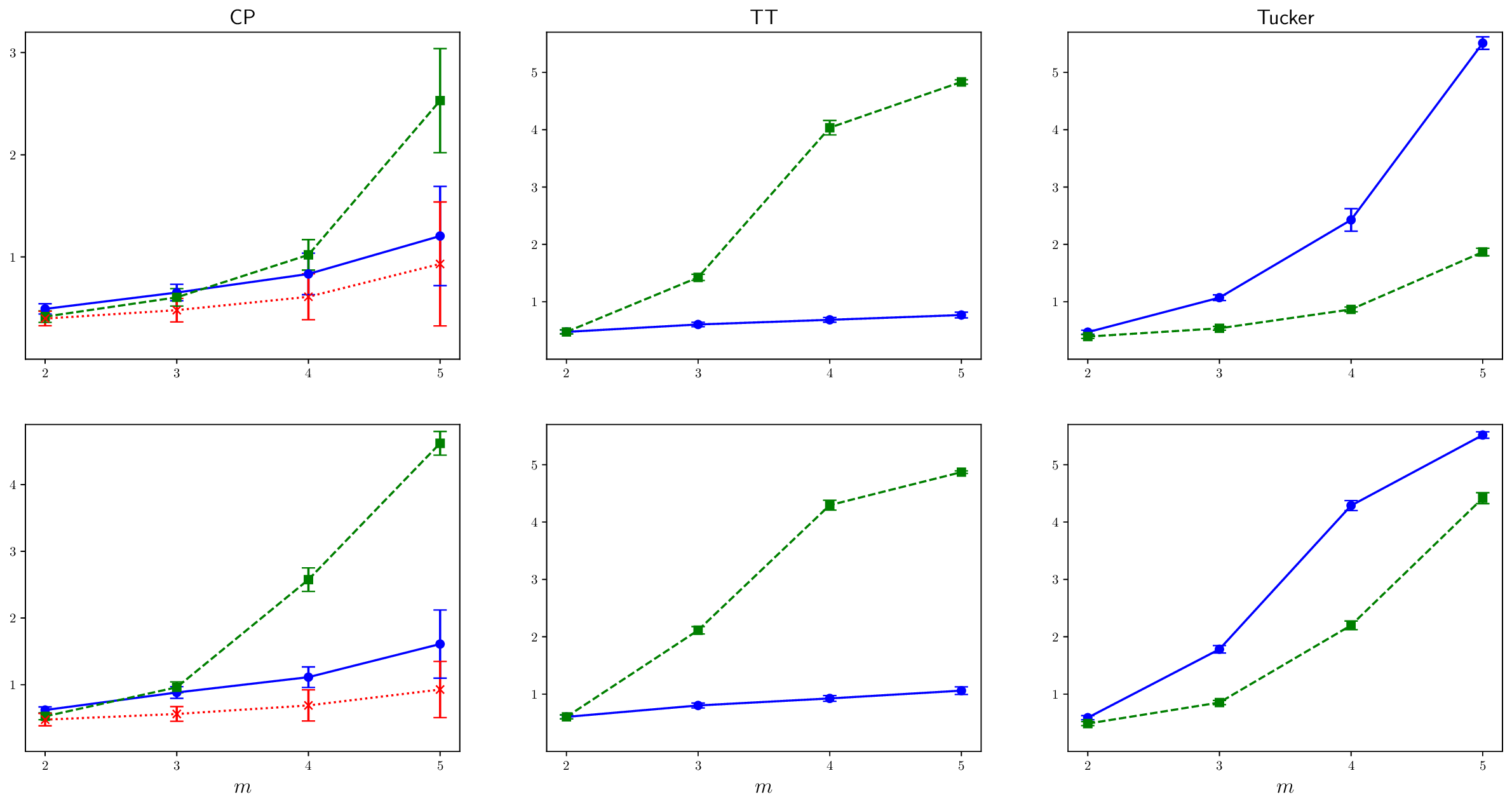}
	\caption{Averaged estimation errors $\|\cm{\widehat{A}}-\cm{A}\|_{\mathrm{F}}$ for tensor regression via Tucker (\protect\Lratetwenty), CP (\protect\Lratefive) and TT (\protect\Lrateten)  decomposition with true coefficient tensor $\cm{A}$ from CP (left panel), TT (middle panel), Tucker (right panel) decomposition and their corresponding ranks being $r=3$ (upper row) or 4 (lower row). One standard deviation is plotted above and below the averaged value.}
	\label{sim:TTvsTucker}
\end{figure}


\begin{figure}[ht]
	\centering
	\includegraphics[width=1\textwidth]{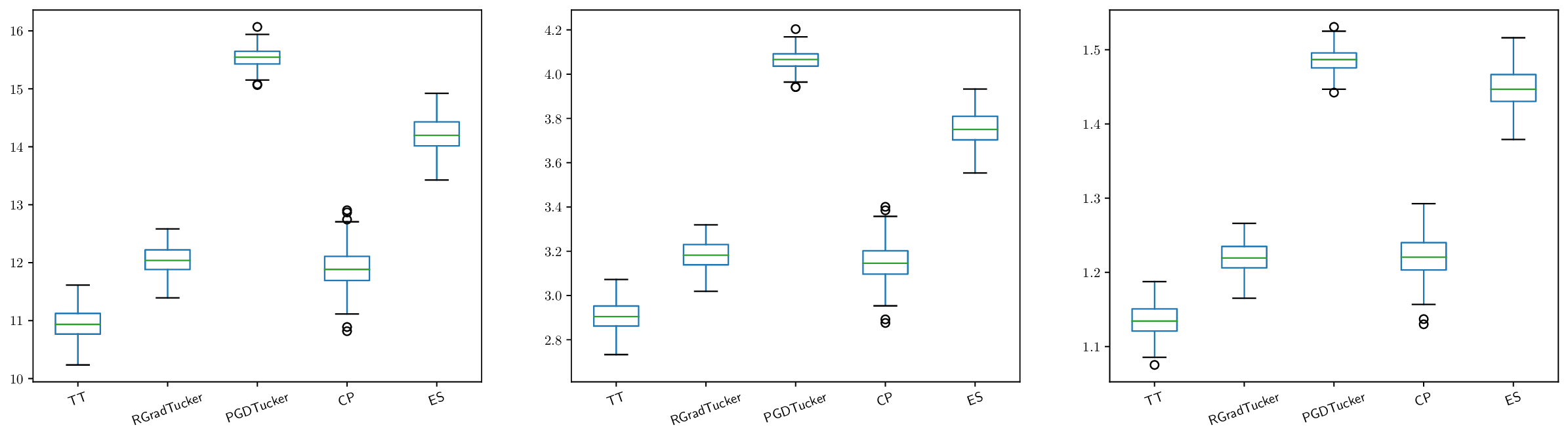}
	\caption{Boxplots of $\ell_1$ (left panel), $\ell_2$ (middle panel), $\ell_\infty$ (right panel) norm of forecasting errors for the ECOG example.}
	\label{ECOG:fig}
\end{figure}

\begin{figure}[ht]
	\centering 
	\includegraphics[width=1\textwidth,height=0.3\textwidth]{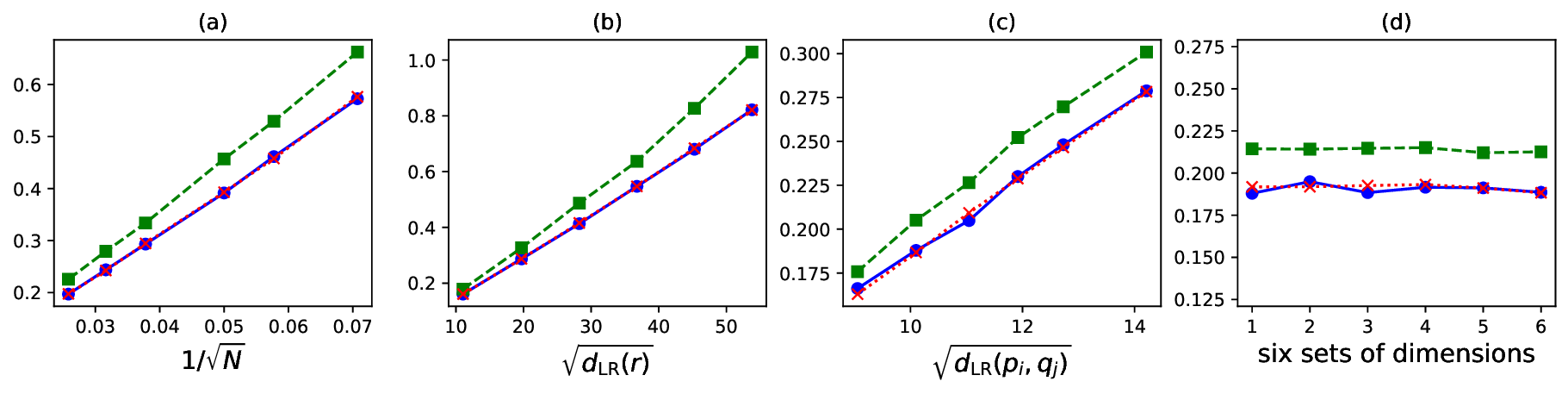}
	\caption{ Averaged estimation errors $\|\cm{\widehat{A}}_{\mathrm{LR}}-\cm{A}\|_{\mathrm{F}}$ for tensor train regression under Settings (a)-(d). Predictors and errors are independent with uniform distribution (\protect\Lratefive), independent with normality (\protect\Lrateten) or correlated with normality (\protect\Lratetwenty). Notation $d_{\mathrm{LR}}(r)$ refers to the values of $d_{\mathrm{LR}}$ with $r$ varying only, and similarly the notation $d_{\mathrm{LR}}(p_i,q_j)$. }
	\label{sim:regfig}
\end{figure}

\begin{figure}[ht]
	\centering
	\subfloat{\includegraphics[width=1.0\linewidth,height=0.4\linewidth]{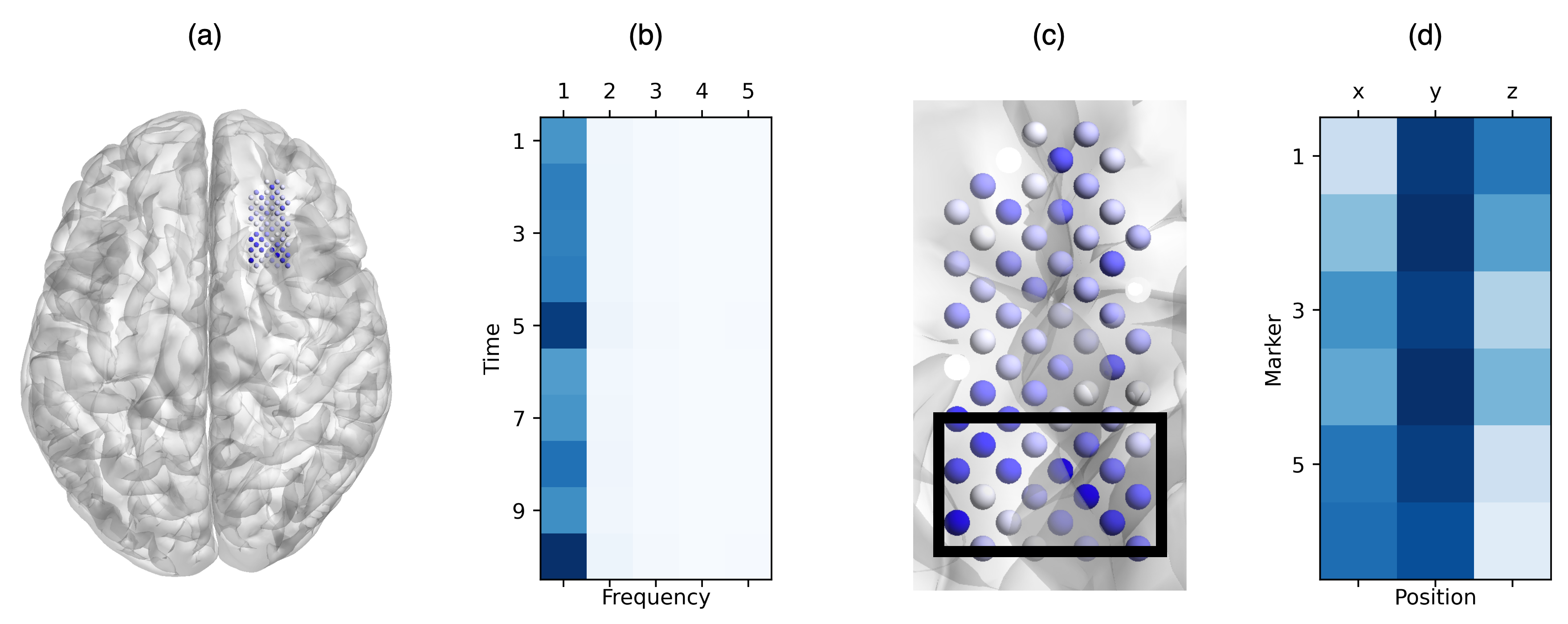}}
	\caption{Four plots from the ECoG example: (a) location the epidural space (blue area), (b) heatmap of absolute loadings on timestamps and center frequencies for the predictor factor with the mode of channels being averaged, (c) absolute loadings on 64 electrode channels for the high-frequency band factor from $\bm{G}_5$ and $\cm{G}_4$, and (d) heatmap of the absolute loading matrix for the response factor. Larger values are presented with darker colors, and the black bounding box at (c) corresponds to the primary motor cortex \citep{Shimoda2012}. }
	\label{brainfig3}
\end{figure}




\end{document}